\tikzset{p0/.style = {shape = circle,    draw, thick, minimum size = 0.4cm}}
\tikzset{p1/.style = {shape = rectangle, draw, thick, minimum size = 0.4cm}}
\tikzset{>=stealth, shorten >=1pt}
\tikzset{every edge/.style = {thick, ->, draw}}
\tikzset{every loop/.style = {thick, ->, draw}}
\renewcommand{\epsilon}{\varepsilon}
\newcommand{\Powerset}{\mathcal{P}}
\newcommand{\Parity}{\mathrm{Par}}
\newcommand{\WaluCheck}{\mathsf{Check}}
\newcommand{\WaluClaim}{\mathsf{Claim}}
\newcommand{\WaluJump}{\mathsf{Jump}}
\newcommand{\WaluWin}{\mathsf{Win}}
\newcommand{\WaluPush}{\mathsf{Push}}
\newcommand{\Push}{\mathrm{push}}
\newcommand{\Pop}{\mathrm{pop}}
\newcommand{\Skip}{\mathrm{skip}}
\newcommand{\MinCol}{\mathsf{MinCol}}
\newcommand{\Sc}{\mathsf{Sc}}
\newcommand{\StairSc}{\mathsf{StairSc}}
\newcommand{\StairPositions}{\mathsf{StairPositions}}
\newcommand{\Stairs}{\mathsf{Stairs}}
\renewcommand{\Reset}{\mathsf{reset}}
\newcommand{\lastBump}{\mathsf{lastBump}}
\newcommand{\lastStrictBump}{\mathsf{lastStrictBump}}
\newcommand{\last}{\mathsf{last}}
\newcommand{\Min}{\mathsf{min}}
\newcommand{\Max}{\mathsf{max}}
\newcommand{\MaxSc}{\mathsf{MaxSc}}
\newcommand{\MaxStairSc}{\mathsf{MaxStairSc}}
\newcommand{\nats}{\mathbb{N}}
\newcommand{\Inf}{\mathrm{Inf}}
\newcommand{\sh}{\mathrm{sh}}
\newcommand{\col}{\mathrm{col}}
\newcommand{\strat}{\sigma}
\newcommand{\Prediction}{\mathsf{Pred}}
\newcommand{\Gammabot}{\Gamma_{\!\!\bot}}
\theoremstyle{plain}
\newtheorem{example}{Example}
\newtheorem{theorem}{Theorem}
\newtheorem{lemma}[theorem]{Lemma}
\newtheorem{definition}[theorem]{Definition}
\newtheorem{remark}[theorem]{Remark}
\title{Playing Pushdown Parity Games in a Hurry}
\author{Wladimir Fridman
\institute{Chair of Computer Science 7 \\
RWTH Aachen University\\ Aachen, Germany}
\email{fridman@automata.rwth-aachen.de}
\and
Martin Zimmermann
\institute{Institute of Informatics\\
University of Warsaw\\
Warsaw, Poland}
\email{zimmermann@mimuw.edu.pl}
}
\begin{document}
\maketitle

 \begin{abstract}
We continue the investigation of finite-duration variants of infinite-duration
games by extending known results for games played on finite graphs to those
played on infinite ones. In particular, we establish an equivalence between
pushdown parity games and a finite-duration variant. This allows us to
determine the winner of a pushdown parity game by solving a reachability game on a
finite tree.
 \end{abstract}

\section{Introduction}

Infinite two-player games on graphs are a powerful tool to model, verify, and
synthesize open reactive systems and are closely related to fixed-point logics.
The winner of a play in such a game typically emerges only after completing the
whole (infinite) play. Despite this, McNaughton became interested in playing
infinite games in finite time, motivated by his belief that ``infinite games
might have an interest for casual living room
recreation''~\cite{DBLP:conf/hcat/McNaughton00}.

As playing infinitely long is impossible for human players, McNaughton
introduced scoring functions for Muller games, a certain type of infinite game.
Each of these functions is associated to one of the two players, so it makes
sense to talk about the scores of a player. The scoring functions are updated
after every move and describe the progress a player has made towards winning the
play.  However, as soon as a scoring function reaches its predefined threshold,
the game is stopped and the player whose score reached its threshold first is
declared to win this (now finite) play.

On the theoretical side, by applying finite-state determinacy of Muller games,
McNaughton showed that a Muller game and a finite-duration variant with a
factorial threshold score have the same winner. Thus, the winner of a Muller
game can be determined by solving a finite reachability game, which is much
simpler to solve, albeit doubly-exponentially larger than the original  Muller
game. 

This result was improved by showing that the finite-duration game with
threshold three always has the same winner as the original Muller
game~\cite{DBLP:journals/corr/abs-1006-1410} and by a (score-based) reduction
from a Muller game to a safety game whose solution not only yields the winner of
the Muller game, but also a winning strategy~\cite{NRZ12}. The
improved threshold does not rely on finite-state determinacy, but is proven by
constructing strategies that are winning for both games at the same time.

The reduction from Muller to safety games yields a new memory structure for
Muller games that implements not only a winning strategy, but also the most
general non-deterministic winning strategy (a so-called permissive strategy)
that prevents the losing player from reaching a certain score. This extends the
work of Bernet et al.\ on permissive strategies for parity
games~\cite{DBLP:journals/ita/BernetJW02} to Muller games. For
parity games, the algorithm presented to compute a permissive strategy is
Jurdzi\'{n}ski's progress measure algorithm~\cite{DBLP:conf/stacs/Jurdzinski00}
for solving parity games. This raises the question of whether there is also a
(score-based) progress measure algorithm for Muller games, which can be derived
from the construction of a permissive~strategy.

In this work, we begin to extend these results to infinite games played on 
infinite game graphs. At first, two questions have to be answered: what type 
of infinite game graphs and what type of winning condition to consider? We have 
to restrict the type of game graphs, since stopping a play after a finite number of
rounds can only lead to an equivalent finite-duration variant if there is some 
regularity in the game graph. A well-researched class of infinite graphs are 
configuration graphs of pushdown systems. Walukiewicz showed how to solve
parity games on such game graphs in exponential time by a reduction to parity 
games on finite game graphs~\cite{DBLP:conf/cav/Walukiewicz96}. 

As for the second question, we also consider parity games. In his work on making
infinite games playable for human players, McNaughton was interested  in Muller
games, since he thought that games for human players should not be positionally
determined in order to be interesting enough. From a theoretical point of view,
this can be argued as follows: every positionally determined game in a finite
game graph has a trivial finite-duration variant. In this variant, a play is
stopped as soon as a vertex is visited for the second time and the winner is the
player who wins the infinite play induced by this cycle. As every positional
winning strategy for the infinite-duration game is also winning for the
finite-duration game, the two games have the same winner.

For a (min-) parity game on a finite game graph this criterion can be improved:
let $|V|_c$ denote the number of vertices colored by $c$. Then, a positional
winning strategy for Player~$i\in \{0, 1\}$ does not visit $|V|_c+1$ vertices of
color $c$ with parity $1-i$ without visiting a vertex of smaller color in between. This
condition can be expressed using scoring functions $\Sc_c$ 
that count the number of vertices of color~$c$ visited since the last visit of a
vertex of color~ $c'<c$. Due to positional determinacy of parity games, the
following finite-duration game has the same winner as the original parity game:
a play is stopped as soon as some scoring function $\Sc_c$ reaches value $|V|_c+1$
for the first time and Player~$i$ is declared to be the winner, if the parity
of $c$ is $i$. Again, a positional winning strategy for the parity game
is also winning for the finite-duration game, i.e., the two games indeed have the same
winner.

However, both criteria do not necessarily yield a finite-duration game when
applied to a game on an infinite game graph; the first one since there could be
infinite simple paths, the second one since there are colors that color
infinitely many vertices, i.e., $|V|_c$ could be infinite. Hence, devising a
finite-duration variant of games on infinite game graphs requires more
sophisticated criteria, even if the game is positionally determined.

We exploit the intrinsic structure of the game graph induced by the pushdown 
system by defining stair-score functions
$\StairSc_c$ for every color~$c$ and show the equivalence between a parity game
and the finite-duration version, when played up to an exponential threshold
stair-score (in the size of the pushdown system). This result shows how to
determine the winner of an infinite game on an infinite game graph by solving a
finite reachability game. We complement this by giving a lower bound on the
threshold stair-score that always yields the same winner, which is exponential
in the cubic root of the size of the underlying pushdown system.

To prove our main theorem, we analyze Walukiewicz's reduction from parity games
on pushdown graphs to parity games on finite graphs and prove a correspondence
between stair-scores in the pushdown game and scores in the finite parity game. 
The winning player of the finite parity game (who
also wins the pushdown game) has a winning strategy that bounds the losing
player's scores by $|V|_c$ (the number of vertices colored by $c$ in the finite
parity game). We show that this strategy can be turned into a winning strategy
for him in the pushdown game that bounds the stair-scores by $|V|_c$ as well.
Since the finite parity game is of exponential size, our result follows.  

This work is organized as follows: after fixing our notation
for parity games and pushdown systems in Section~\ref{sec_defs}, we introduce
the score and stair-score functions in Section~\ref{sec_scores}. In
Section~\ref{sec_walu}, we recall Walukiewicz's reduction, which we apply in
Section~\ref{sec_main} to prove our main theorem, namely the equivalence between
parity games on pushdown graphs and their finite-duration variant. Finally, in
Section~\ref{sec_lowerbounds}, we prove the lower bounds on the threshold score
that always yields an equivalent finite-duration game.

\section{Preliminaries}
\label{sec_defs}

The power set of a set $X$ is denoted by $\Powerset(X)$. 
The set of non-negative integers is denoted by $\nats$. For $n\in\nats$,
let $[n]=\{0,\ldots,n-1\}$ and $\Parity(n)=0$ if $n$ is even, and 
$\Parity(n)=1$ if $n$ is odd. Moreover, for every alphabet $\Sigma$, i.e., a finite set of symbols, 
the set of finite words is denoted by $\Sigma^*$, 
and $\Sigma^\omega$ denotes the set of infinite words. 
The length of a word $w\in\Sigma^*$ is denoted by $|w|$ and $\epsilon$ denotes the empty word,
i.e., the word of length $|\epsilon|=0$.
For $n\in\nats$, the set of words of length at most $n$ is denoted by 
$\Sigma^{\leq n}$ and for $\Sigma^*\setminus\{\epsilon\}$ we also write $\Sigma^+$.
For a word $w \in \Sigma^+ \cup \Sigma^\omega$ 
and $n\in\nats$, 
we write $w(n-1)$ for the $n$-th letter of $w$ (the first letter is $w(0)$) and denote its last letter by $\last(w)$. For $w \in \Sigma^*$ and $w' \in \Sigma^* \cup \Sigma^\omega$, we write
$w \sqsubseteq w'$ if $w$ is a prefix of $w'$ and $w \sqsubset w'$ if $w$ is a strict prefix of $w'$. For a word $\rho\in \Sigma^\omega$, let 
$\Inf(\rho)=\{a \in \Sigma \mid \rho(n)=a \text{ for infinitely many } n\}$. 

\subsection{Parity Games} A game graph is a tuple 
$G=(V,V_0,V_1,E, v_{\mathrm{in}})$ where $(V,E)$ is a (possibly countably infinite) 
directed graph with set $V$ of vertices and set $E\subseteq V\times V$ 
of edges, where $V_0\cup V_1$ is a partition of $V$ and $v_{\mathrm{in}}\in V$ 
is the initial vertex. We assume that every vertex has at least one outgoing 
edge. Vertices from $V_i$ belong to Player~$i$, for $i\in\{0,1\}$.

A parity game $\mathcal G= (G, \col)$ consists of a game graph $G$ and a
coloring function $\col\colon V \rightarrow [n]$, for some $n\in\nats$. Given
$\col$, we define $\MinCol\colon V^+ \rightarrow [n]$ by
$\MinCol(w)=\Min\{\col(w(i)) \mid 0\le i< |w| ) \}$. A play of $\mathcal G$ is
built up by the two players by moving a token on the game graph. Initially, the
token is placed on $v_{\mathrm{in}}$. In every round, if the current vertex $v$
is in $V_i$, then Player $i$ has to choose an outgoing edge $(v,v')\in E$ and
the token is moved to the successor $v'$. Thus, a play in $\mathcal G$ is an
infinite sequence $\rho \in V^\omega$ such that $\rho(0)= v_{\mathrm{in}}$ and
$(\rho(n), \rho(n+1))\in E$ for every $n \in \nats$. Such a play $\rho$ is
winning for Player~$0$ if $\Min\{\Inf(\col(\rho))\}$ is even, otherwise it is
winning for Player~$1$. Here, $\col(\rho)$ represents the sequence of colors
seen by $\rho$. Thus, we sometimes refer to the coloring function $\col$ as a
min-parity condition.

A strategy for Player $i$ is a function $\strat\colon V^*V_i \rightarrow V$ 
such that $(\last(w),\strat(w))\in E$ for every $w \in V^* V_i$. A strategy 
$\strat$ is called positional if $\strat(w)=\strat(w')$ holds for all 
$w,w'\in V^*V_i$ with $\last(w)=\last(w')$. 
A play $\rho$ is consistent with $\strat$ for Player $i$ if 
$\rho(n+1)=\strat(\rho(0)\cdots\rho(n))$ for every $n\in\nats$ with 
$\rho(n)\in V_i$. A strategy $\strat$ is a winning strategy for 
Player $i$ if every play~$\rho$ that is consistent with $\strat$ 
is winning for Player~$i$. We say that Player~$i$ wins~$\mathcal G$
if there exists  a winning strategy for Player~$i$. A game is determined if one of the players wins it.

\begin{theorem}[\cite{DBLP:conf/focs/EmersonJ91,Mos91}]
 Parity games are determined with positional winning strategies. 
\end{theorem}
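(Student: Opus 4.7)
The plan is to prove this classical result by induction on the number of distinct colors in the image of $\col$. Since $\col$ takes values in $[n]$, this number is finite, even though the game graph may be countably infinite, so the induction is well-founded. I would follow a Zielonka-style argument.

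For the base case of a single color $c$, every infinite play has minimum color $c$, so the winner is determined by the parity of $c$ and any positional strategy is winning. For the inductive step, let $d = \Min\{\col(v) \mid v \in V\}$ and, by symmetry, assume $d$ is even, so that Player~$0$ wants to visit color $d$ infinitely often. The key construction is the attractor $A$ of Player~$0$ to $\col^{-1}(d)$, which admits a positional attractor strategy: at each $v \in A$, pick an outgoing edge leading to a vertex of strictly smaller attractor rank. On the subgame induced by $V \setminus A$, color $d$ no longer appears, so strictly fewer colors are in play and the induction hypothesis supplies positional winning strategies on the two winning regions of that subgame. By iteratively peeling off Player~$1$'s attractor to her winning region in the residual subgames and invoking the induction hypothesis on what remains, one obtains a partition of $V$ into the winning regions of the two players in $\mathcal G$, together with positional winning strategies. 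These strategies are assembled by combining positional attractor strategies (inside the peeled-off sets) with the inductive positional strategies (outside); any play consistent with Player~$0$'s combined strategy either visits color $d$ infinitely often or eventually remains in the residual region where the inductive strategy wins, and symmetrically for Player~$1$.

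The main obstacle, absent in the finite-graph setting treated by the original references, is that both the attractor construction and the outer peel-off iteration may require transfinite recursion along countable ordinals. One must verify that a single positional attractor strategy is valid across all rank levels and that well-foundedness of the rank ordering forces attracted plays to reach the target in finitely many rounds, even when individual ranks are infinite ordinals. Once these standard points are dispatched, the inductive assembly of positional strategies goes through verbatim, yielding determinacy with positional winning strategies on arbitrary (including countably infinite) game graphs, which is exactly what we need for the pushdown setting considered in the remainder of the paper.
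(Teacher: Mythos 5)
Your proposal is sound, but note that the paper does not prove this statement at all: it is imported as a black box from the cited works of Emerson--Jutla and Mostowski, whose original arguments go via signature/$\mu$-calculus style constructions rather than the attractor decomposition you describe. What you give is the standard Zielonka-style proof: induction on the number of colors, attractor of the player who owns the minimal color $d$ to $\col^{-1}(d)$, induction hypothesis on the $d$-free subgame, and an outer (possibly transfinite) iteration that peels off the opponent's attractor to her winning region until a fixpoint is reached, after which the complement is a trap for the opponent and the combined positional strategies win on it. This route is arguably the most elementary one and, as you rightly stress, it is the version one actually needs here, since the pushdown game graphs of the paper are countably infinite, so both the attractor ranks and the outer peel-off must be taken along ordinals; the well-foundedness argument you mention is exactly what guarantees that attracted plays reach the target in finitely many steps. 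Your sketch leaves some routine points implicit (that the subgame induced on the complement of an attractor has no dead ends, that a winning region of a residual subgame need not be winning in the full game before its attractor is peeled off -- which is precisely why the outer iteration is needed -- and that the final residual region is closed under the combined strategy at the minimal-color vertices), but these are the standard verifications of Zielonka's argument and none of them fails in the infinite-graph setting with finitely many colors. So the proposal is a correct, more self-contained alternative to the paper's citation, at the cost of carrying out a transfinite construction the paper simply delegates to the literature.
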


\subsection{Pushdown Game Graphs} 
A pushdown system (PDS) $\mathcal P = (Q,\Gamma,\Delta, q_{\mathrm{in}})$ 
consists of a finite set of states $Q$ with an initial state 
$q_{\mathrm{in}}\in Q$, a stack alphabet $\Gamma$ with the initial 
stack symbol $\bot\notin\Gamma$, which can neither be written nor 
deleted from the stack, and a transition relation 
$\Delta\subseteq Q \times \Gammabot \times Q \times \Gammabot^{\leq 2}$, 
where $\Gammabot=\Gamma\cup\{\bot\}$. We say that  a transition 
$\delta=(q,A,q',\alpha)\in\Delta$ is a 
$\Push$-transition if $|\alpha|=2$, $\delta$ is a 
$\Skip$-transition if $|\alpha|=1$, and $\delta$ is a 
$\Pop$-transition if $\alpha=\epsilon$. 
In the following, we assume
every PDS to be deadlock-free, i.e., for every $q\in Q$ and $A\in\Gammabot$
there exist $q'\in Q$ and $\alpha\in\Gammabot^{\leq 2}$ 
such that $(q,A,q',\alpha)\in\Delta$.

A stack content is a word from $\Gamma^*\bot$ where the leftmost 
symbol is assumed to be the top of the stack. A configuration is 
a pair $(q,\gamma)$ consisting of a state $q\in Q$ and a stack 
content $\gamma\in\Gamma^*\bot$. The stack height of a configuration 
$(q,\gamma)$ is defined by $\sh(q,\gamma)=|\gamma|-1$. 
Furthermore, we write $(q,\gamma)\mapstochar\relbar(q',\gamma')$ 
if there exists $(q,\gamma(0),q',\alpha)\in\Delta$ and 
$\gamma'=\alpha\gamma(1)\cdots \gamma(|\gamma|-1)$.

For a PDS $\mathcal P$, the induced pushdown graph is the infinite 
directed graph $G(\mathcal P)=(V, E)$ where 
$V=\{(q,\gamma)\mid q\in Q, \gamma\in\Gamma^*\bot\}$ is the set of 
configurations and $(v,v')\in E$ 
if $v\mapstochar\relbar v'$. Notice that every vertex of
the pushdown graph $G(\mathcal P)$ has at least one outgoing edge, 
since $\mathcal P$ is deadlock-free.
Consider a partition $Q_0 \cup Q_1$ of the set of states $Q$.
The induced pushdown game graph $G=(V,V_0,V_1,E,v_{\mathrm{in}})$ is a game graph where 
$(V, E)=G(\mathcal P)$, the partition $V_0\cup V_1$ of the set of configurations
$V$ is defined by $V_i=\{(q,\gamma)\in V\mid q\in Q_i\}$, for $i\in\{0,1\}$, 
and $v_{\mathrm{in}}=(q_{\mathrm{in}},\bot)$.
Given such a pushdown game graph $G$ and a coloring $\col\colon Q \rightarrow [n]$ of its states,
we obtain a parity game by extending $\col$ to configurations via $\col(q,\gamma)=\col(q)$,
for every state $q\in Q$ and every stack content $\gamma\in\Gamma^*\bot$.
We refer to such a game as a pushdown game.
%
%

\begin{example}\label{example_pushdown_game}
Consider the pushdown system $\mathcal P=(\{q_\mathrm{in},q_1,q_2\}, \{A\}, \Delta, q_\mathrm{in})$ where $\Delta$ is the following set 
\begin{equation*}\{
 (q_\mathrm{in},X, q_\mathrm{in}, AX),
 (q_\mathrm{in},X, q_1, AX),
 (q_1,A, q_1, \varepsilon),
 (q_1,\bot, q_2, \bot),
 (q_2,A, q_2, \varepsilon),
 (q_2,\bot, q_2, \bot)
\mid X\in \{A,\bot\}\}.\end{equation*} The partition $Q_0=\{q_1,q_2\}$ and $Q_1=\{q_\mathrm{in}\}$ yields 
the pushdown game graph $G$ depicted in Figure~\ref{pushdown_graph}, where the circles
indicate Player~$0$ configurations and squares are Player~$1$ configurations. With the coloring function $\col$ such that
$\col(q_\mathrm{in})=\col(q_2)=0$ and $\col(q_1)=1$ Player~$0$ wins the pushdown game $(G, \col)$, as every play visits only 
a finite number of configurations colored by $1$. 
\end{example}
\begin{figure}
\begin{center}
\begin{tikzpicture}[scale=1.1]

\foreach \x in {0,...,7}
  {
  \node[p1] at (1.3*\x,0) (a\x) {};
  \node[p0] at (1.3*\x,-.7) (b\x) {};
  }

  \node[p0] at (0,-1.4) (c0) {};

\node at (10.2,0) (aend) {$\cdots$};
\node at (10.2,-.7) (bend) {$\cdots$};

\node at (-1.5, 0) (alabel) {$q_\mathrm{in}$};
\node at (-1.5,-.7) (blabel) {$q_1$};
\node at (-1.5,-1.4) (clabel) {$q_2$};

\path[thick, draw]
(a0) edge (a1)
(a1) edge (a2)
(a2) edge (a3)
(a3) edge (a4)
(a4) edge (a5)
(a5) edge (a6)
(a6) edge (a7)
(a7) edge (aend)
(a0) edge (b1)
(a1) edge (b2)
(a2) edge (b3)
(a3) edge (b4)
(a4) edge (b5)
(a5) edge (b6)
(a6) edge (b7)
(a7) edge (bend)
(b1) edge (b0)
(b2) edge (b1)
(b3) edge (b2)
(b4) edge (b3)
(b5) edge (b4)
(b6) edge (b5)
(b7) edge (b6)
(bend) edge (b7)


(b0) edge[bend left=0] (c0)
(c0) edge[loop left] ()
(-.7,0) edge (a0);

\end{tikzpicture}

\end{center}
\caption{A pushdown game graph (only the part reachable from the initial vertex is shown)}
\label{pushdown_graph}
\end{figure}
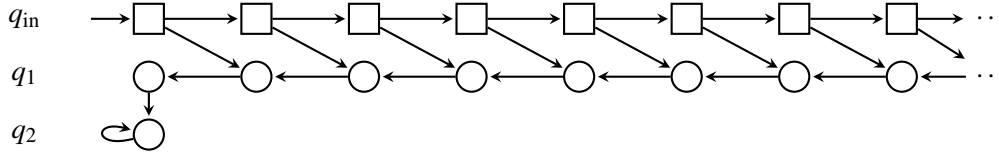

We extend the notion of PDS to pushdown transducers (PDT) by attaching input and
output alphabets. A PDT $\mathcal
T=(Q,\Gamma,\Delta,q_\mathrm{in},\Sigma_I,\Sigma_O, \lambda)$, where $Q$,
$\Gamma$ and $q_\mathrm{in}$ are as for PDS and $\Delta$ is modified such that
$\Delta\subseteq Q\times \Gammabot\times(\Sigma_I\cup\{\epsilon\})\times Q
\times \Gammabot^{\leq 2}$, additionally contains an input alphabet $\Sigma_I$,
an output alphabet $\Sigma_O$, and a partial output function $\lambda\colon Q
\rightarrow \Sigma_O$. A PDT is deterministic if it satisfies\
\[\left|\{(q',\alpha)\mid(q,A,a,q',\alpha)\in
\Delta\}\right|+\left|\{(q',\alpha)\mid(q,A,\epsilon,q',\alpha)\in
\Delta\}\right|\leq 1\]
for all $q\in Q$, all $a\in\Sigma_I$, and all $A\in\Gammabot$.
Analogously to PDS, we write
$(q,\gamma)\overset{a}{\mapstochar\relbar}(q',\gamma')$ if there exists a transition
$(q,\gamma(0),a,q',\alpha)\in\Delta$ such that $\gamma'=\alpha\gamma(1)\cdots
\gamma(|\gamma|-1)$. A run $\rho$ of a PDT on a word $w\in(\Sigma_I)^*$ is a
sequence of configurations $\rho=(q_0,\gamma_0)\cdots(q_m,\gamma_m)$ such that
$\rho(0)=(q_\mathrm{in},\bot)$, for all $0\leq i < m$ there exists
$a_i\in\Sigma_I\cup\{\epsilon\}$ with
$(q_i,\gamma_i)\overset{a_i}{\mapstochar\relbar}(q_{i+1},\gamma_{i+1})$ such that
$a_0\cdots a_{m-1}=w$, and $\{(q,\alpha)\mid(q_m,\gamma_m(0),\epsilon,q,\alpha)\in \Delta\}$
is empty (i.e., no execution of an $\epsilon$-transition is possible from the last
configuration of a run). A deterministic PDT $\mathcal T$ defines a partial
function $f_\mathcal{T}\colon (\Sigma_I)^* \rightarrow \Sigma_O$ such that
$f_\mathcal{T}(w)=\lambda(q)$, where $q$ is the state of the last configuration of
the (unique) run of $\mathcal T$ on $w$, if such a run exists.
    
To implement pushdown strategies in a pushdown game we will use PDT. To have a
finite input alphabet, we represent play prefixes here by sequences of
transitions and not by sequences of configurations. Notice that both
representations can easily be converted into each other. Furthermore, the output
will be the next transition to be chosen by Player~$i$ instead of the next
configuration. Hence, we use the set of transitions of the PDS defining the
pushdown game for both the input and the output alphabet of the PDT. So, the
transducer consumes a play prefix in the pushdown graph represented by a
sequence of transitions and outputs the transition which Player~$i$ should
choose next (in case the last configuration of the play prefix is a Player~$i$
configuration). Thus, we have to require the output transition to be executable
from the last configuration of the play prefix induced by the input sequence. 
\section{Finite-Time Pushdown Games}
\label{sec_scores}

In this section, we introduce a finite-duration variant of pushdown games.
To this end, we adapt the concept of scoring functions, which were originally
introduced by McNaughton \cite{DBLP:conf/hcat/McNaughton00} for Muller games 
(see also \cite{DBLP:journals/corr/abs-1006-1410}), to parity games. In the 
following, let $(G,\col)$ be a parity game with 
$G=(V,V_0,V_1,E,v_{\mathrm{in}})$ and $\col\colon V \rightarrow [n]$.
\begin{definition}[Scoring functions]
For every $c\in[n]$, define the function $\Sc_c\colon V^* \rightarrow \nats$ 
by $\Sc_c(\varepsilon)=0$ and for $w\in V^*$ and $v\in V$ by
\[
\Sc_c(wv)=
\begin{cases} 
\Sc_c(w) & \text{if }\col(v)>c,\\  
\Sc_c(w)+1 & \text{if } \col(v)=c, \\
0 & \text{if } \col(v)<c. 
\end{cases}
\]
Furthermore, for every $c\in [n]$,  
$\MaxSc_c\colon V^*\cup V^\omega \rightarrow \nats\cup \{\infty\}$ 
is defined by 
$\MaxSc_c(\rho)=\underset{w\sqsubseteq \rho}{\Max} \ \Sc_c(w)$.
\end{definition}

A positional winning strategy for Player~$i$ in a parity game
does not visit a vertex~$v$ with $\Parity(\col(v))=1-i$ twice without 
visiting some vertex of strictly smaller color in between.
Hence, applying the pigeonhole principle shows that positional winning 
strategies in finite parity games bound the scores of the losing player.
For $c\in[n]$, let $|V|_c$ denote the number of vertices of color $c$,
i.e., $|V|_c=|\{v\in V \mid \col(v)=c\}|$.

\begin{remark}
\label{remark}
Let $\strat$ be a positional winning strategy for Player~$i$ in a parity game
with a finite vertex set~$V$. Then, for every $\rho$ that is consistent with
$\strat$, $\MaxSc_c(\rho)\leq |V|_c$ for all $c\in[n]$ such that
$\Parity(c)=1-i$.
\end{remark}

Thus, winning a finite parity game, i.e., a parity game with a finite game graph, can also be characterized by being able to
achieve a certain threshold score. As soon as this threshold score is reached
the play can be stopped, since the winner is certain. This is the idea behind
finite-time versions of infinite games. Formally, a finite-time parity game $(G,
\col, k)$ consists of a game graph $G$, a min-parity condition $\col$ and a
threshold $k\in \nats \setminus \{ 0 \}$. A play in $(G, \col, k)$ is a finite path
$w=w(0)\cdots w(r) \in V^*$ with $w(0)=v_\mathrm{in}$ such that $\MaxSc_c(w)=k$
for some $c\in[n]$, and $\MaxSc_{c}(w(0)\cdots w(r-1))<k$ for all $c\in [n]$.
The play $w$ is winning for Player~$i$ if $\Parity(c)=i$. The notions of
(winning) strategies are defined as usual.


By induction over ther number $n$ of colors one can show that every threshold $k$ is eventually reached by some score
function if the path is sufficiently long. Thus, there are no
draws due to infinite plays.

\begin{lemma}
\label{every_k_is_reached} 
For every $w\in V^*$ with $|w|\geq k^{n}$, there is some $c\in[n]$ such that
$\MaxSc_c(w)\geq k$. 
\end{lemma}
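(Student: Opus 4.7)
My plan is to proceed by induction on the number $n$ of colors. For the base case $n = 1$, the only color is $0$, so $\Sc_0$ is never reset and simply increments at every step; hence $\Sc_0(w) = |w| \geq k^1 = k$, and $\MaxSc_0(w) \geq k$ follows immediately. For the inductive step, I assume the claim for $n - 1$ colors and fix $w \in V^*$ with $|w| \geq k^n$ under a coloring $\col\colon V \to [n]$.

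If $\MaxSc_0(w) \geq k$ we are done, so assume $\MaxSc_0(w) < k$. As no color is strictly less than $0$, the function $\Sc_0$ is never reset and counts color-$0$ visits, so $w$ contains at most $k - 1$ positions of color $0$. Removing them partitions $w$ uniquely into the shape $u_0 v_1 u_1 v_2 \cdots v_m u_m$, with each $v_i$ of color $0$, each $u_j$ using only colors in $\{1, \ldots, n-1\}$, and $m + 1 \leq k$. Since $\sum_j |u_j| = |w| - m \geq k^n - (k - 1)$, the pigeonhole principle yields an index $j$ with $|u_j| \geq \lceil (k^n - k + 1)/k \rceil \geq k^{n-1}$. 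Viewing $u_j$ as a standalone word under the shifted coloring $\col'(v) = \col(v) - 1$, whose image lies in $[n-1]$, the induction hypothesis gives a $c' \in [n-1]$ and a prefix $u' \sqsubseteq u_j$ whose score under $\col'$ at level $c'$ reaches $k$. A term-by-term comparison of the defining recursion shows that this score equals $\Sc_{c'+1}(u')$ under the original coloring.

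It remains to transfer this local bound to $w$ itself. Let $p$ be the prefix of $w$ ending immediately before the occurrence of $u_j$. Either $p = \varepsilon$ and $\Sc_{c'+1}(p) = 0$ by definition, or $p$ ends at the color-$0$ vertex $v_j$, in which case the third clause of the recursion forces $\Sc_{c'+1}(p) = 0$ because $0 < c' + 1$. Since the recursive definition of $\Sc_{c'+1}$ depends only on the previous value and the current color, starting from $0$ and feeding in $u'$ produces the same sequence of values whether we begin at $\varepsilon$ or at $p$; thus $\Sc_{c'+1}(p u') = \Sc_{c'+1}(u') \geq k$, hence $\MaxSc_{c'+1}(w) \geq k$. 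The only real subtlety is ensuring this clean reset when crossing from $p$ into $u_j$; once that is pinned down, what remains is a routine pigeonhole count and a mechanical unfolding of the recursion.
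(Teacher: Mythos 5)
Your proof is correct, and it follows the same route the paper indicates: the paper only sketches this lemma as ``by induction over the number $n$ of colors,'' and your argument (base case $n=1$, then splitting off the at most $k-1$ color-$0$ positions, pigeonholing a block of length at least $k^{n-1}$, applying the induction hypothesis to the shifted coloring, and transferring the score back via the reset at the block boundary) is exactly a careful elaboration of that induction. No gaps; the pigeonhole bound $\lceil (k^n-k+1)/k\rceil \geq k^{n-1}$ and the reset-to-zero observation at the start of the chosen block are the right points to pin down.
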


Hence, a play in a finite-time parity game is stopped after at most
exponentially many rounds. Moreover, using the construction of \cite{DBLP:journals/corr/abs-1006-1410} for Muller games (which also holds for parity games) 
it can also be shown that the bound in
Lemma~\ref{every_k_is_reached} is tight, i.e., for every $k$ there is a $w\in
V^*$ with $|w|=k^{n}-1$ such that $\MaxSc_c(w)< k$ for all $c\in[n]$.

Furthermore, it is never the case that two different score functions are
increased  in the same round: by definition of the score functions, only the
value of $\Sc_{\col(w(i))}$ is increased in round $i$ of a play $w$. Hence, as
soon as some score function is increased to the threshold a unique winner
can be declared.

\begin{lemma}
\label{no_2_increase_in_the_same_round}
Let $w\in V^*$, $v\in V$ and $c,c'\in [n]$. If $\Sc_c(wv)=\Sc_c(w)+1$ 
and $\Sc_{c'}(wv)=\Sc_{c'}(w)+1$, then $c=c'$. 
\end{lemma}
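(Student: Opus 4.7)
The plan is to read the increment condition off directly from the three-case definition of the scoring function. For any color $d \in [n]$, exactly one of the three clauses determines $\Sc_d(wv)$, depending on how $\col(v)$ compares to $d$: the value is left unchanged, incremented by one, or reset to zero. So an increase of exactly one can only arise from the middle clause, where $\col(v) = d$.

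Concretely, I would argue as follows. Suppose $\Sc_d(wv) = \Sc_d(w) + 1$ for some $d \in [n]$. If $\col(v) > d$, then by the definition $\Sc_d(wv) = \Sc_d(w)$, contradicting the strict increase. If $\col(v) < d$, then $\Sc_d(wv) = 0$, which would force $\Sc_d(w) = -1$; this is impossible since $\Sc_d$ takes values in $\nats$. Hence the only remaining case applies: $\col(v) = d$.

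Applying this observation to both hypotheses, $\Sc_c(wv) = \Sc_c(w) + 1$ gives $\col(v) = c$ and $\Sc_{c'}(wv) = \Sc_{c'}(w) + 1$ gives $\col(v) = c'$. Therefore $c = \col(v) = c'$, as required. There is no real obstacle here; the only subtle point worth stating explicitly is that the reset-to-zero branch cannot masquerade as an increment precisely because scores are non-negative, which rules out the $\col(v) < d$ case.
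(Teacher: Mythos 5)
Your proof is correct and matches the paper's own (one-line) justification, which observes that by definition only $\Sc_{\col(v)}$ can be incremented when $v$ is appended; you simply spell out why the other two clauses cannot produce a $+1$, including the non-negativity point for the reset case. Nothing further is needed.
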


In \cite{DBLP:journals/corr/abs-1006-1410}, the equivalence 
between Muller games and finite-time 
Muller games (using the original scoring functions for Muller games) on
finite game graphs is shown for the constant threshold $k=3$.
A simple consequence of Remark~\ref{remark} is an analogous 
result for parity games on finite game graphs.

\begin{theorem}
\label{thm_finite-arena_player_wins_iff_he_wins_ft}
Let $G$ be a finite game graph with vertex set $V$ and $\col\colon V \rightarrow
[n]$. For every threshold~$k>\underset{c\in [n]}{\Max} \ |V|_c$,
Player~$i$ wins $(G,\col)$ if and only if Player~$i$ wins $(G,\col,k)$.
\end{theorem}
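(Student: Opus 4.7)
The plan is to derive the theorem as a fairly direct consequence of Remark~\ref{remark}, Lemma~\ref{every_k_is_reached}, and positional determinacy (Theorem~1 of the excerpt). The two directions are handled separately, with the backward direction reduced to the forward one via determinacy of both games.

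For the forward direction, suppose Player~$i$ wins $(G,\col)$. By Theorem~1, there exists a positional winning strategy $\strat$ for Player~$i$ in $(G,\col)$. I would then argue that the very same $\strat$, interpreted as a (positional) strategy in the finite-time game $(G,\col,k)$, is winning there. Consider any play $w \in V^*$ in $(G,\col,k)$ that is consistent with $\strat$. Extend $w$ to an infinite path $\rho$ by continuing to follow $\strat$ (well-defined since every vertex has an outgoing edge and $\strat$ is winning from every reachable position). By Remark~\ref{remark}, $\MaxSc_c(\rho) \le |V|_c < k$ for every $c\in[n]$ with $\Parity(c)=1-i$, so in particular $\Sc_c(w') < k$ for every prefix $w' \sqsubseteq w$ whenever $\Parity(c)=1-i$. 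Hence the stopping condition of the finite-time game can only have been triggered by some $c$ with $\Parity(c)=i$, so Player~$i$ wins the play $w$. It remains to check that every maximal $\strat$-consistent path in $(G,\col,k)$ is indeed a (finite) play, i.e., that the stopping condition is reached in finitely many rounds; this is precisely Lemma~\ref{every_k_is_reached}, which guarantees that after $k^n$ steps some $\MaxSc_c$ has reached~$k$.

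For the backward direction, I would use the contrapositive together with determinacy. Suppose Player~$i$ does not win $(G,\col)$. By Theorem~1, parity games are determined, so Player~$1-i$ wins $(G,\col)$. Applying the forward direction just proved to Player~$1-i$ yields that Player~$1-i$ wins $(G,\col,k)$, and therefore Player~$i$ does not win $(G,\col,k)$. The last step uses determinacy of the finite-time game, which holds because, by Lemma~\ref{every_k_is_reached}, every $\strat$-consistent maximal path has length at most $k^n$; thus $(G,\col,k)$ unfolds to a finite reachability game on a finite tree of depth $\le k^n$, which is determined by a standard backward-induction argument (Zermelo's theorem).

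The main obstacle, as I see it, is conceptual rather than technical: one has to bridge the gap between the \emph{infinite}-duration notion of winning strategy used in Remark~\ref{remark} and the \emph{finite}-duration game, ensuring that a positional strategy from the infinite game really does remain well-defined and winning when plays are cut off by the score threshold, and that no play escapes the stopping condition. Both points are resolved by Lemma~\ref{every_k_is_reached} (termination) and Remark~\ref{remark} together with Lemma~\ref{no_2_increase_in_the_same_round} (unique winner at the stopping round). Once these pieces are in place, the proof is essentially a one-line invocation of positional determinacy plus the score bound $|V|_c < k$.
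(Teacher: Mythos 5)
Your proposal is correct and follows essentially the same route the paper intends: the paper presents Theorem~\ref{thm_finite-arena_player_wins_iff_he_wins_ft} as ``a simple consequence of Remark~\ref{remark}'', i.e., a positional winning strategy for the parity game keeps the opponent's scores at or below $|V|_c<k$, so the first score to hit the threshold has the winner's parity, with the converse direction supplied by determinacy. Your additional care about termination (Lemma~\ref{every_k_is_reached}) and unique winners (Lemma~\ref{no_2_increase_in_the_same_round}) matches the supporting lemmas the paper establishes for exactly this purpose.
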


It is easy to see that this result does not hold for infinite game graphs.
Consider the pushdown game from Example~\ref{example_pushdown_game} and
recall that Player~$0$ wins it. However, for every threshold $k>0$, Player~$1$
has a winning strategy in the corresponding finite-time pushdown game by moving
the token to configuration $(q_1,A^{k-1}\bot)$, which completely specifies a
strategy for Player~$1$. Following this strategy, Player~$1$ wins since color
$1$ is the first to reach score $k$ which happens when the token arrives at the
configuration~$(q_1,\bot)$.  

To obtain an analogous result for pushdown games, we have to adapt the
scoring functions. Now, let $(G,\col)$ be a pushdown game. Fix a path through
the pushdown graph. A configuration is said to be a stair configuration, if no
subsequent configuration of smaller stack height exists in this path.

\begin{definition}[Stairs \cite{DBLP:conf/fsttcs/LodingMS04}]
Define the functions $\StairPositions\colon V^+ \cup V^\omega \rightarrow 2^\nats$ and 
$\Stairs\colon V^+ \cup V^\omega \rightarrow V^+ \cup V^\omega$ as follows: for $w\in V^+ \cup V^\omega$, let
\[\StairPositions(w)=\{n\in \nats \mid \forall m\geq n: \sh(w(m)) \geq \sh(w(n)) \} \]
and $\Stairs(w)=w(n_0)w(n_1)\cdots$, where $n_0 < n_1 < \cdots$ is the ascending enumeration of
$\StairPositions(w)$.
\end{definition}

Now, using the notion of stairs, we define  stair-score functions for pushdown
games. To simplify our notation, let $\Reset(v)=\varepsilon$ and
$\lastBump(v)=v$ for $v\in V$ and for $w=w(0)\cdots w(r)$ with $r\geq 1$, let
$\Reset(w)=w(0)\cdots w(l)$ and $\lastBump(w)=w(l+1)\cdots w(r)$, where $l$ is
the greatest position such that $\sh(w(l))\leq \sh(w(r))$ and $l\neq r$, i.e.,
$l$ is the second largest\footnote{Notice that the last position of a finite
path is always a stair position.} stair position of $w$. Figure~\ref{fig:stairs}
illustrates the above definitions, where an example path $w$ and the
corresponding stack heights are depicted. The stair positions are indicated by
the marked stack heights. Furthermore, the figure also illustrates our new
definition of stair-scores which we define next. 
\begin{figure}
\begin{center}
\begin{tikzpicture}[scale=.95]
 
\draw[fill=black] (0,0)    circle (.09cm);      \draw[thick, draw=black, fill=none] (0,0) circle (.16cm); 
\draw[fill=black] (1,0.5)  circle (.09cm);      \draw[thick, draw=black, fill=none] (1,0.5) circle (.16cm); 
\draw[fill=black] (2,1)    circle (.09cm);     \draw[thick, draw=black, fill=none] (2,1) circle (.16cm); 
\draw[fill=black] (3,1.5)  circle (.09cm); 
\draw[fill=black] (4,2)    circle (.09cm); 
\draw[fill=black] (5,1.5)  circle (.09cm); 
\draw[fill=black] (6,2)    circle (.09cm);        
\draw[fill=black] (7,1.5)  circle (.09cm);      
\draw[fill=black] (8,1)    circle (.09cm);     \draw[thick, draw=black, fill=none] (8,1) circle (.16cm); 
\draw[fill=black] (9,1.5)  circle (.09cm);     \draw[thick, draw=black, fill=none] (9,1.5) circle (.16cm); 
\draw[fill=black] (10,1.5) circle (.09cm);     \draw[thick, draw=black, fill=none] (10,1.5) circle (.16cm); 
\draw[fill=black] (11,2)   circle (.09cm);       
\draw[fill=black] (12,2.5) circle (.09cm); 
\draw[fill=black] (13,2)   circle (.09cm); 
\draw[fill=black] (14,1.5) circle (.09cm);     \draw[thick, draw=black, fill=none] (14,1.5) circle (.16cm); 
 
\path[draw, thick] (0,0) -- (1,0.5) -- (2,1) -- (3,1.5) -- (4,2) -- (5,1.5) -- (6,2) -- (7,1.5) -- (8,1) -- (9,1.5) -- (10,1.5) -- (11,2) -- (12,2.5) -- (13,2) -- (14,1.5); 
 
\foreach \x in {0, 0.5,...,2.5} 
\path[draw, dashed] (-.3, \x) -- (14,\x); 
 
\path[draw, -stealth, thick] (0,-.4) -- (14.3,-.4); 

\path[draw, -stealth, thick] (-.3,0) -- (-.3,3.0); 
 
\draw[decorate, thick, decoration={brace,amplitude=5}] (-.2,2.7) -- node[auto, yshift =3]{$\Reset(w)$} (10.3,2.7); 
 
\draw[decorate, thick, decoration={brace,amplitude=5}] (10.7,2.7) -- node[auto, yshift =3]{$\lastBump(w)$} (14.3,2.7); 
 
\draw[draw=none] (-0.5,0) -- node[sloped]{stack height} (-0.5,3); 
 
\node at (14.65,-.4) {$w$}; 
 
\node at (-1,-.8) {col}; 
\node at (0,-.8) {$0$}; 	
\node at (1,-.8) {$2$}; 
\node at (2,-.8) {$1$}; 
\node at (3,-.8) {$0$}; 
\node at (4,-.8) {$2$}; 
\node at (5,-.8) {$1$}; 
\node at (6,-.8) {$0$}; 
\node at (7,-.8) {$0$}; 
\node at (8,-.8) {$0$}; 
\node at (9,-.8) {$1$}; 
\node at (10,-.8) {$1$}; 
\node at (11,-.8) {$1$}; 
\node at (12,-.8) {$1$}; 
\node at (13,-.8) {$2$}; 
\node at (14,-.8) {$1$}; 
 

\node at (-1,-1.3) {$\StairSc_0$}; 
\node at (0,-1.3) {$1$}; 
\node at (1,-1.3) {$1$}; 
\node at (2,-1.3) {$1$}; 
\node at (3,-1.3) {$2$}; 
\node at (4,-1.3) {$2$}; 
\node at (5,-1.3) {$2$}; 
\node at (6,-1.3) {$3$}; 
\node at (7,-1.3) {$3$}; 
\node at (8,-1.3) {$2$}; 
\node at (9,-1.3) {$2$}; 
\node at (10,-1.3) {$2$}; 
\node at (11,-1.3) {$2$}; 
\node at (12,-1.3) {$2$}; 
\node at (13,-1.3) {$2$}; 
\node at (14,-1.3) {$2$};

 
\node at (-1,-1.8) {$\StairSc_1$}; 
 
\node at (0,-1.8) {$0$}; 
\node at (1,-1.8) {$0$}; 
\node at (2,-1.8) {$1$}; 
\node at (3,-1.8) {$0$}; 
\node at (4,-1.8) {$0$}; 
\node at (5,-1.8) {$1$}; 
\node at (6,-1.8) {$0$}; 
\node at (7,-1.8) {$0$}; 
\node at (8,-1.8) {$0$}; 
\node at (9,-1.8) {$1$}; 
\node at (10,-1.8) {$2$}; 
\node at (11,-1.8) {$3$}; 
\node at (12,-1.8) {$4$}; 
\node at (13,-1.8) {$4$}; 
\node at (14,-1.8) {$3$}; 
 
 
\node at (-1,-2.3) {$\StairSc_2$}; 
 
\node at (0,-2.3) {$0$}; 
\node at (1,-2.3) {$1$}; 
\node at (2,-2.3) {$0$}; 
\node at (3,-2.3) {$0$}; 
\node at (4,-2.3) {$1$}; 
\node at (5,-2.3) {$0$}; 
\node at (6,-2.3) {$0$}; 
\node at (7,-2.3) {$0$}; 
\node at (8,-2.3) {$0$}; 
\node at (9,-2.3) {$0$}; 
\node at (10,-2.3) {$0$}; 
\node at (11,-2.3) {$0$}; 
\node at (12,-2.3) {$0$}; 
\node at (13,-2.3) {$0$}; 
\node at (14,-2.3) {$0$}; 
\end{tikzpicture}
\end{center}
  \caption{A finite path w, its stair positions and its stair-scores.}
\label{fig:stairs} 
\end{figure}
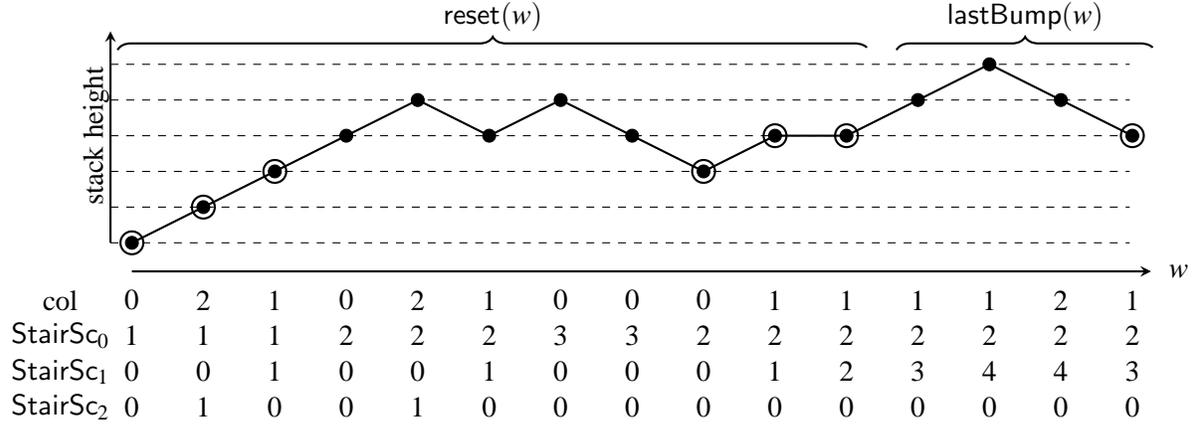

\begin{definition}[Stair-scoring function]
For every color $c\in[n]$, define the function $\StairSc_c\colon V^* \rightarrow \nats$ by
$\StairSc_c(\varepsilon)=0$ and for $w\in V^+$ by 
\[
\StairSc_c(w)=
\begin{cases}
\StairSc_c(\Reset(w)) & \text{if } \MinCol(\lastBump(w))>c,\\
\StairSc_c(\Reset(w))+1 & \text{if } \MinCol(\lastBump(w))=c,\\
0 & \text{if } \MinCol(\lastBump(w))<c.
\end{cases}
\]
\noindent Furthermore, for every color $c\in[n]$, the function
$\MaxStairSc_c\colon V^* \cup V^\omega \rightarrow \nats \cup \{\infty\}$ 
is defined by $\MaxStairSc_c(\rho)=\underset{w\sqsubseteq \rho}{\Max} \ \StairSc_c(w)$.
\end{definition}

Now, using these notions we define finite-time pushdown games. Such a game~$(G,
\col, k)$ consists of a pushdown game graph $G$, a min-parity condition $\col$
and a threshold $k\in \nats\setminus\{0\}$. A play in $(G, \col, k)$ is a finite path
$w=w(0)\cdots w(r) \in V^*$ with $w(0)=v_\mathrm{in}$ such that
$\MaxStairSc_c(w)=k$ for some $c\in[n]$, and $\MaxStairSc_{c}(w(0)\cdots
w(r-1))<k$ for all $c\in [n]$. The play $w$ is winning for Player~$i$ if
$\Parity(c)=i$. Again, the notions of (winning) strategies are defined as usual.

As above, every threshold~$k$ is eventually reached by some stair-score
function if the play is sufficiently long: a  simple
induction shows that every $w\in V^+$ with $0\in\StairPositions(w)$ and with
$|w|\geq 2^m$ has a prefix $w'\sqsubseteq w$ such that $|\StairPositions(w')| >
m$. Furthermore, for every play prefix $w'\sqsubseteq w$ a sequence $u'\in Q^*$
of states with $|u'|=|\Stairs(w')|$ can be constructed such that for every color
$c\in[n]$, $\StairSc_c(w')=\Sc_c(u')$. Combining these two properties and
Lemma~\ref{every_k_is_reached} yields the desired upper bound on the length of a play.

\begin{lemma}
\label{every_k_is_reached_pushdownversion} 
For every $w\in V^*$ with
$|w|\geq 2^{k^{n}}$ there is some 
$c\in[n]$ such that $\MaxStairSc_c(w)\geq k$. 
\end{lemma}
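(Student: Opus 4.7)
The plan is to reduce the claim to Lemma~\ref{every_k_is_reached} via two intermediate steps: a \emph{doubling lemma} controlling how many stair positions accumulate in a sufficiently long prefix, and a \emph{translation} that recasts stair-scores of a pushdown prefix as ordinary scores of a finite word over the color alphabet.

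The doubling lemma states that for every $w \in V^+$ with $0 \in \StairPositions(w)$ and $|w| \geq 2^m$, some prefix $w' \sqsubseteq w$ satisfies $|\StairPositions(w')| > m$. I would prove it by induction on $m$; the base $m = 0$ is immediate, since the last letter of any non-empty word is always a stair. For the step, given $|w| \geq 2^{m+1}$, I distinguish whether the stack height returns to $\sh(w(0))$ at some position $r > 0$. If not, the suffix $w(1) \cdots w(|w|-1)$ stays strictly above $\sh(w(0))$, satisfies $0 \in \StairPositions$, and has length $\geq 2^m$; the induction hypothesis yields a prefix with $> m$ stair positions, and prepending $w(0)$ contributes one extra stair since every later letter strictly exceeds $\sh(w(0))$. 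If such an $r$ exists, the sub-paths $w(1) \cdots w(r-1)$ (strictly above $\sh(w(0))$) and $w(r) \cdots w(|w|-1)$ (starting and staying at height $\geq \sh(w(0))$) both satisfy $0 \in \StairPositions$, and their lengths sum to $|w|-1 \geq 2^{m+1}-1$, so at least one has length $\geq 2^m$. Applying the IH to that sub-path and then prepending the outer letters of $w$ (which are not stairs in the second subcase but do not destroy any stair of the sub-path either) yields a prefix of $w$ with $|\StairPositions| > m+1$.

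For the translation, given $w' \sqsubseteq w$ with $0 \in \StairPositions(w')$, enumerate $\StairPositions(w')$ as $0 = n_0 < n_1 < \cdots < n_K$ and define $u' \in [n]^*$ of length $K+1$ by $u'(0) = \col(w'(0))$ and $u'(i) = \MinCol\bigl(w'(n_{i-1}+1) \cdots w'(n_i)\bigr)$ for $1 \leq i \leq K$. Interpreting $u'$ as a play in a trivial parity game over vertex set $[n]$ with the identity coloring, a straightforward induction on $K$ that unfolds the recursive definition of $\StairSc_c$ via $\Reset$ and $\lastBump$ yields $\StairSc_c(w') = \Sc_c(u')$ for every color~$c$.

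To finish, given $w$ with $|w| \geq 2^{k^n}$, apply the doubling lemma with $m = k^n$ to obtain a prefix $w' \sqsubseteq w$ with $|\StairPositions(w')| > k^n$, translate $w'$ to $u' \in [n]^*$ of length $> k^n$, and invoke Lemma~\ref{every_k_is_reached} on $u'$ to get some color~$c$ with $\MaxSc_c(u') \geq k$. Pulling this back through the translation gives $\MaxStairSc_c(w) \geq \MaxStairSc_c(w') \geq k$. The main obstacle is the bookkeeping in the doubling lemma, specifically verifying that the stair positions of the inductively chosen sub-path lift correctly to stair positions of the whole prefix; this relies on the fact that stack heights change by at most one per transition, ruling out large drops that could wipe out many stair positions at once.
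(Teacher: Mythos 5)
Your route is exactly the one the paper sketches: (i) a ``doubling'' claim that every $w$ with $0\in\StairPositions(w)$ and $|w|\geq 2^m$ has a prefix with more than $m$ stair positions, (ii) a translation of the stair-scores of such a prefix into ordinary scores of a word of length $|\Stairs(w')|$, and (iii) an application of Lemma~\ref{every_k_is_reached}. Your translation and the final assembly are sound; in particular, the prefixes of $w'$ ending at stair positions of $w'$ have exactly the corresponding stair positions, which is what makes pulling $\MaxSc_c(u')\geq k$ back to $\MaxStairSc_c(w)\geq k$ legitimate. (Like the paper, you tacitly use $0\in\StairPositions(w)$ when invoking the doubling lemma; this holds because the $w$ of interest are play prefixes starting at the initial configuration with empty stack, and it is worth saying so.)

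There is, however, a bookkeeping slip in the doubling lemma, in the subcase where the induction hypothesis is applied to the second sub-path $w(r)\cdots w(|w|-1)$. You assert that the prepended letters $w(0)\cdots w(r-1)$ ``are not stairs''; with that accounting the constructed prefix of $w$ is only guaranteed more than $m$ stair positions (the lifted ones), not the more than $m+1$ you claim, and the induction would break. The assertion is in fact false for position $0$: since $0\in\StairPositions(w)$, no position of $w$ --- hence of any prefix of $w$ --- has stack height below $\sh(w(0))$, so position $0$ is a stair position of \emph{every} non-empty prefix of $w$. Positions $1,\dots,r-1$ are indeed not stairs of the chosen prefix (they lie strictly above $\sh(w(0))=\sh(w(r))$ and the prefix contains position $r$), but position $0$ supplies exactly the missing $+1$, and it is distinct from the lifted stair positions, which sit at indices at least $r\geq 1$. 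With this one-line correction (used uniformly in both subcases, where in the first subcase it is what you already implicitly count), your induction goes through, and the rest of the argument --- the use of the $\pm 1$ stack-height property to get $0\in\StairPositions$ of the sub-paths, the length arithmetic, and the reduction to Lemma~\ref{every_k_is_reached} --- is correct.
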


Thus, a play in a finite-time pushdown game stops after a doubly-exponential
number of rounds. Again, the bound in
Lemma~\ref{every_k_is_reached_pushdownversion} is tight. Moreover,
Lemma~\ref{no_2_increase_in_the_same_round} can directly be translated to the
new definition of stair-scoring functions which ensures a unique winner of a
play.

\begin{lemma}
Let $w\in V^*$, $v\in V$ and $c,c'\in [n]$. If $\StairSc_c(wv)=\StairSc_c(w)+1$ 
and $\StairSc_{c'}(wv)=\StairSc_{c'}(w)+1$, then $c=c'$. 
\end{lemma}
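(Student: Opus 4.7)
The plan is to mimic the proof of Lemma~\ref{no_2_increase_in_the_same_round} by establishing a stronger statement: whenever $\StairSc_c(wv) > \StairSc_c(w)$, the color $c$ is uniquely determined by $wv$, namely $c = \MinCol(\lastBump(wv))$. Applying this to the hypotheses once for $c$ and once for $c'$ then immediately yields $c = c'$.

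To prove the sub-claim, I perform a case distinction on $m := \MinCol(\lastBump(wv))$. If $m < c$, the definition of $\StairSc_c$ forces $\StairSc_c(wv) = 0$, contradicting a strict increase over $\StairSc_c(w) \geq 0$. If $m = c$, the conclusion already holds. The core of the argument is ruling out the possibility $m > c$: in this case the definition gives $\StairSc_c(wv) = \StairSc_c(\Reset(wv))$, and I must show this value equals $\StairSc_c(w)$, thereby excluding any strict increase.

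The case $m > c$ is the main obstacle, since this is where the stair-scoring functions depart substantively from the ordinary scoring functions. If $\Reset(wv) = w$ the equality is immediate. Otherwise $\Reset(wv) = w(0) \cdots w(p)$, where $p$ is a stair position of $w$ strictly before $\last(w)$. The hypothesis $m > c$ implies that every color appearing in $w(p+1), \ldots, w(|w|-1), v$ exceeds $c$, so in particular $\MinCol(\lastBump(w)) > c$, and hence $\StairSc_c(w) = \StairSc_c(w(0) \cdots w(p'))$, where $p'$ denotes the second-to-last stair position of $w$. I then exploit the fact that the stair positions of any prefix of $w$ ending at a stair position of $w$ coincide with the stair positions of $w$ that lie in that prefix. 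This lets me iterate the collapse along the stair positions of $w$ descending from $p'$ to $p$: at each step, the relevant $\lastBump$ consists entirely of colors exceeding $c$, so $\StairSc_c$ is preserved. Chaining these equalities produces $\StairSc_c(w) = \StairSc_c(\Reset(wv)) = \StairSc_c(wv)$, finishing the sub-claim and hence the lemma.
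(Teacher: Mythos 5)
Your proof is correct, and it supplies exactly what the paper leaves implicit: the paper gives no argument for this lemma, merely asserting that Lemma~\ref{no_2_increase_in_the_same_round} ``directly translates'' to stair-scores, i.e.\ that only the stair-score of the color $\MinCol(\lastBump(wv))$ can increase when $w$ is extended by $v$. Your sub-claim is precisely that statement, and the only place where the translation is not literally immediate is your case $m>c$ with $\Reset(wv)\sqsubset w$, since there $\StairSc_c(wv)=\StairSc_c(\Reset(wv))$ refers to a proper prefix of $w$ and one must check that no larger value of $\StairSc_c$ was accumulated in between. Your handling of this is sound: the second-largest stair position of $wv$ is a stair position of $w$; a prefix of $w$ ending at a stair position of $w$ has exactly the stair positions of $w$ lying in it; and every vertex of $w$ strictly beyond $\Reset(wv)$ has color exceeding $c$, so descending through the stair positions of $w$ the value of $\StairSc_c$ is preserved at every step, yielding $\StairSc_c(w)=\StairSc_c(\Reset(wv))=\StairSc_c(wv)$ and excluding a strict increase. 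This is the natural elaboration of the paper's intended reasoning (an alternative would be to invoke the paper's observation that $\StairSc_c(w')=\Sc_c(u')$ for a state sequence $u'$ along $\Stairs(w')$ and then apply Lemma~\ref{no_2_increase_in_the_same_round}, but proving that correspondence requires the same stair-position bookkeeping you carried out), so your route is a legitimate and complete direct proof.
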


In Section~\ref{sec_main}, we prove the equivalence between pushdown games and
finite-time pushdown games. To this end, we adapt Walukiewicz's reduction from
pushdown parity games to parity games on finite game graphs, which we recall in
the following section.

\section{Walukiewicz's Reduction}
\label{sec_walu}

Walukiewicz showed that pushdown games can be solved 
in exponential time \cite{DBLP:conf/cav/Walukiewicz96}. 
In this section, we recall his technique which comprises 
a reduction to parity games on finite game 
graphs. We present a slight modification of 
the original construction which is needed to prove our 
result in the next section.

Let $\mathcal G=(G,\col)$ 
be a pushdown game with game graph 
$G=(V,V_0,V_1,E,v_\mathrm{in})$ induced by $\mathcal P=(Q,\Gamma,\Delta,q_\mathrm{in})$ with 
partition $Q_0\cup Q_1$ of $Q$ and min-parity condition 
$\col\colon Q\rightarrow [n]$. To simulate $\mathcal G$ by a 
game on a finite game graph the information stored on the stack 
is encoded by some finite memory structure. The essential 
component of this structure is the set 
$\Prediction=(\Powerset(Q))^{n}$, which we call the set of 
predictions. A prediction 
$P=(P_0, \ldots, P_{n-1})\in\Prediction$ contains for every 
$c\in[n]$ a subset $P_c\subseteq Q$ of states.

The core idea of the game simulating the 
pushdown game is the following:
The players are assigned different tasks, one of them makes 
predictions and the other one verifies them. Whenever a 
push-transition is to be simulated the predicting player 
has to make a prediction $P\in\Prediction$ about the future round $t$ when the same stack height as 
before performing the push-transition is reached again for 
the first time (if it is reached at all). 
With this prediction, the predicting player claims that if the current push-transition is performed, 
then in round $t$ some state $q\in P_c$ will be reached if $c\in[n]$ 
is the minimal color seen in between. Once a prediction $P$ is 
proposed, the verifying player has two ways of reacting, 
either believing that $P$ is correct or not. In the first case, 
he is not interested in verifying $P$, so the push-transition is 
not performed and the verifying player chooses a color $c\in[n]$ 
and a state $q\in P_c$, for some $P_c\neq\emptyset$, and skips a part 
of the simulated play by jumping to an appropriate position in 
the play. In the other case, he wants to verify the correctness 
of $P$, so the push-transition is performed and when the top of the 
stack is eventually popped it will turn out whether $P$ is correct 
or not. The predicting player wins if $P$ turns out to be correct 
and otherwise the verifying player wins. So after a pop-transition 
the winner is certain. For the other case, 
where no pop-transition is performed at all, the parity condition 
determines the winner. 

In the following, let Player~$i$ take the 
role of the predicting player and Player~$1-i$ the role of the verifying one.
The game $\mathcal G_i'=(G',\col')$ which depends on $i\in\{0,1\}$, with 
$G'=(V',V_0',V_1',E', v_\mathrm{in}')$ is defined as follows: 
For all states $q\in Q$, stack symbols $A, B \in \Gammabot$, 
colors $c,d\in[n]$ and predictions $P,R\in \Prediction$, the set 
$V'$ contains the vertices $\WaluCheck[q,A,P,c,d]$ 
which correspond to the configurations of $\mathcal G$, auxiliary 
vertices $\WaluPush[P,c,q,AB]$, $\WaluClaim[P,c,q,AB,R]$ and 
$\WaluJump[q,A,P,c,d]$ which serve as intermediates to signalize 
the intention to perform a push-transition, to make a new prediction 
and to skip a part of a simulated play, and finally the sink vertices 
$\WaluWin_i[q]$ and $\WaluWin_{1-i}[q]$. 

The set $E'$ consists of the following edges (for the sake of readability, we
denote an edge $(v_1,v_2)\in E'$ here by $v_1 \rightarrow v_2$).
For every skip-transition $\delta=(q,A,p,B)\in\Delta$ there are edges 
\[\WaluCheck[q,A,P,c,d]\rightarrow\WaluCheck[p,B,P,\Min\{c,\col(p)\},\col(p)]\, ,\]
for $P\in \Prediction$ and $c,d\in[n]$. Thus, the first 
two components of the $\WaluCheck$-vertices are updated according to $\delta$, 
the prediction $P$ remains untouched, the last but one component is used 
to keep track of the minimal color for being able to check the prediction 
for correctness and the last component determines the color of the current 
$\WaluCheck$-vertex. 
For every push-transition $\delta=(q,A,p,BC)\in\Delta$ there are edges 
\[\WaluCheck[q,A,P,c,d]\rightarrow\WaluPush[P,c,p,BC]\, ,\] 
for all $P\in \Prediction$ and $c,d\in[n]$. 
Here, a player states that a push-transition is to be performed such 
that the current state $q$ has to be changed to $p$
and the top of the stack $A$ has to be replaced by $BC$. The information 
containing the current prediction $P$ and the minimal color $c$ is 
carried over, as this is needed in the case where the verifying player 
decides to skip.
Moreover, to make a new prediction~$R$, all edges
\[\WaluPush[P,c,p,BC]\rightarrow\WaluClaim[P,c,p,BC,R]\] for every $R \in \Prediction$ are needed. 
In case a new prediction is to be verified, a push-transition is 
finally performed 
using edges of the form 
\[\WaluClaim[P,c,p,BC,R]\rightarrow\WaluCheck[p,B,R,\col(p),\col(p)]\]
where the prediction $P$, the 
color $c$ and the lower stack symbol $C$ are discarded, since they are no longer needed. For the other 
case, where the verifying player intends to skip a part of a play, all edges 
\[\WaluClaim[P,c,p,BC,R]\rightarrow\WaluJump[q,C,P,c,e] \]
with  $q\in R_e$ are contained in $E'$. Here, the verifying player chooses a 
color $e\in[n]$ for the minimal color of the skipped part and a state $q$ from 
the corresponding component $R_e$ of the prediction $R$. Now, the lower stack symbol~$C$, 
the prediction $P$ and the color $c$ additionally have to be carried over, 
whereas $B$ and $R$ are discarded. Then, all edges 
\[\WaluJump[q,C,P,c,e]\rightarrow\WaluCheck[q,C,P,\Min\{c,e,\col(q)\},\Min\{e,\col(q)\}]\] 
are contained in $E'$ where the last component of the $\WaluCheck$-vertex 
is set to be the minimum of the color of the current state $q$ and the 
minimal color of the part just skipped. For the last but one component, we also 
have to account for the color $c$, which is necessary for eventually checking $P$ 
for correctness. 
Finally, we have for every pop-transition 
$(q,A,p,\varepsilon)\in\Delta$, the edges 
\begin{align*}
&\WaluCheck[q,A,P,c,d]\rightarrow\WaluWin_i[p]\enspace\text{ if }p\in P_c \, , \text{ and} \\
&\WaluCheck[q,A,P,c,d]\rightarrow\WaluWin_{1-i}[p]\enspace\text{ if }p\notin P_c\, ,
\end{align*} 
for $P\in\Prediction$ and $c,d\in[n]$, 
which lead to the sink vertex of the predicting player $\WaluWin_i[p]$ if the 
prediction~$P$ turns out to be correct or to the sink vertex of the verifying 
player $\WaluWin_{1-i}[p]$ otherwise. Moreover, we have
$(\WaluWin_{j}[q],\WaluWin_{j}[q])\in E'$, for $j\in\{0,1\}$ and $q\in Q$.

The initial vertex $v_\mathrm{in}'$ has to correspond 
to the initial configuration $v_\mathrm{in}=(q_\mathrm{in},\bot)$, so it is defined to be 
$\WaluCheck[q_\mathrm{in}, \bot, P^\mathrm{in}, \col(q_\mathrm{in}),\col(q_\mathrm{in})]$ 
where $P^\mathrm{in}_c=\emptyset$ for every $c\in[n]$, as the 
$\bot$-symbol cannot be deleted from the stack. The set of vertices 
$V_i'$ of the predicting Player~$i$ is defined to consist 
of all $\WaluPush$-vertices, as there Player~$i$ has to make 
a new prediction, and of those $\WaluCheck[p,A,P,c,d]$ vertices where $p\in Q_i$. 
Accordingly, all other vertices belong to Player~$1-i$.
Finally, the coloring function $\col'\colon V' \rightarrow [n+1]$ is defined by 
$\col'(\WaluCheck[p,A,P,c,d])=d$ and 
$\col'(\WaluWin_j[q])=j$, for 
$j\in\{0,1\}$. All other vertices are colored by the maximal color $n$ (which does not appear in $\mathcal G$), since 
they are auxiliary vertices and should have no influence on the minimal color seen infinitely often. 
This is guaranteed by the structure of $G'$, as there are no loops consisting only of auxiliary vertices. 
Notice that in the original construction, $\WaluJump$-vertices
are colored by the minimal color of the skipped part of the play 
which is chosen by the verifying player. This is avoided here by shifting 
the color of a $\WaluJump$-vertex to the successive $\WaluCheck$-vertex. 
For this purpose, the last component of the $\WaluCheck$-vertices is introduced.

\begin{theorem}[\cite{DBLP:conf/cav/Walukiewicz96}]
\label{thm_walu}
Let $\mathcal G$ be a pushdown game. Player~$i$ wins 
$\mathcal G$ if and only if Player~$i$ wins $\mathcal G_i'$.
\end{theorem}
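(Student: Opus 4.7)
My plan is to prove both implications by translating winning strategies between the two games, relying on positional determinacy of parity games (applied to the finite game $\mathcal{G}_i'$). It suffices to show that if one of the two players has a positional winning strategy in $\mathcal{G}_i'$, then the same player wins $\mathcal{G}$; both directions of the theorem then follow from determinacy of parity games.

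\emph{Forward direction, predicting player.} Given a positional winning strategy $\sigma'$ for Player~$i$ in $\mathcal{G}_i'$, I would define a pushdown strategy $\sigma$ for Player~$i$ in $\mathcal{G}$ by running $\sigma'$ on a shadow play in $\mathcal{G}_i'$. At each moment during a play of $\mathcal{G}$ consistent with $\sigma$, I maintain a shadow vertex of the form $\WaluCheck[q,A,P,c,d]$ where $(q,A\gamma')$ is the current pushdown configuration, $P$ is the prediction that $\sigma'$ produced when the topmost symbol $A$ was pushed, $c$ is the minimum color seen since that push, and $d=\col(q)$. Alongside the actual stack, $\sigma$ keeps a ``prediction stack'' storing these~$P$'s. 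Skip-transitions and push-transitions in $\mathcal{G}$ are simply lifted from $\sigma'$ (for pushes, the new prediction proposed at the intermediate $\WaluPush$-vertex is recorded on the prediction stack). A pop in $\mathcal{G}$ has no direct counterpart; the shadow play instead takes the ``skip-the-push'' branch $\WaluPush \to \WaluClaim \to \WaluJump \to \WaluCheck$ corresponding to the recorded prediction $P$, the actual returning state $p$, and the actual accumulated minimum color $c$.

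\emph{Key lemma and how to attack it.} The heart of the argument is: for every play of $\mathcal{G}$ consistent with $\sigma$, whenever a pop occurs, the returning state $p$ lies in $P_c$ (so the shadow jump-edge really exists), and the resulting infinite shadow play is winning for Player~$i$ in $\mathcal{G}_i'$. I would prove prediction correctness by contradiction: a wrong prediction would let a virtual Player~$1-i$ choose ``verify'' at the critical $\WaluClaim$-vertex and subsequently imitate the real play, eventually driving the shadow play into $\WaluWin_{1-i}$, contradicting that $\sigma'$ is winning. This requires an induction on the nesting depth of pushes between a push and its matching pop. Once correctness of predictions is established, the parity of the real play transfers to that of the shadow play because all auxiliary $\WaluPush/\WaluClaim/\WaluJump$ vertices have the maximal color~$n$ and thus do not affect $\Min\Inf$; hence $\Min\Inf(\col'(\rho_{\text{shadow}}))=\Min\Inf(\col(\rho))$, and $\sigma$ is winning.

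\emph{Reverse direction and main obstacle.} The symmetric construction, starting from a positional winning strategy of the verifying Player~$1-i$ in $\mathcal{G}_i'$, yields a winning strategy for Player~$1-i$ in $\mathcal{G}$; now Player~$1-i$ in the shadow always chooses ``verify'' at $\WaluClaim$-vertices since that is exactly what the pushdown play does, and the check-for-correctness at pop-transitions is the place where any incorrect prediction by the predicting player is exposed. The main obstacle will be the bookkeeping in the forward direction: keeping the shadow-play invariants aligned across arbitrarily deep nested excursions of the pushdown stack, and turning the ``wrong prediction implies verifier wins in $\mathcal{G}_i'$'' intuition into a clean inductive argument. This is essentially the content of Walukiewicz's original correctness proof, now phrased with the modified coloring that places the colors on $\WaluCheck$-vertices via the new last component~$d$.
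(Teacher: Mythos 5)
The paper never proves this theorem itself: it is imported from Walukiewicz, and the only ingredient the paper develops is the forward translation you describe, namely the pushdown strategy implemented by the transducer $\mathcal T_\sigma$ whose states are the vertices of $G'|_{\sigma_i'}$ and whose stack holds the pending $\WaluClaim$-vertices, with the rewind-to-$\WaluJump$ step at pops; its correctness is then folded into Lemma~\ref{walu_scores_are_equivalent_to_stair_scores}. Your forward direction is essentially this construction, and your prediction-correctness argument is the right one (the shadow play up to the pop, with ``verify'' at the critical push and skips inside completed excursions, is consistent with $\sigma_i'$, and a step to $\WaluWin_{1-i}$ would contradict that $\sigma_i'$ is winning; note also that $\WaluClaim$- and $\WaluJump$-vertices belong to Player~$1-i$, so the rewind stays consistent with $\sigma_i'$). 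One gloss: the parity transfer is not merely ``auxiliary vertices have color $n$''. The shadow play \emph{replaces} each completed excursion by a single step whose color is the excursion's minimal color (the $d$-component after a $\WaluJump$), so you need the standard stair argument that $\Min\{\Inf(\col(\rho))\}$ equals the minimal color occurring infinitely often in this stair/bump sequence; that is a real (if standard) lemma, not a triviality.

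The genuine gap is your reverse direction. The construction is \emph{not} symmetric, and ``Player~$1-i$ in the shadow always chooses verify'' does not work. First, the $\WaluClaim$-vertices belong to the verifier, so when you translate a winning strategy $\tau'$ of Player~$1-i$, the verify-vs-jump choice is dictated by $\tau'$; overriding it by ``always verify'' makes the shadow play inconsistent with $\tau'$, and you lose the only guarantee you have. Second, the $\WaluPush$-vertices belong to the predictor, who has no counterpart in $\mathcal G$ when you translate the verifier's strategy, so you must say how the predictions $R\in\Prediction$ fed to $\tau'$ are synthesized --- your sketch is silent on this, and this choice is exactly where the difficulty lies. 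Third, even under ``always verify'' the shadow play dies in a sink $\WaluWin$-vertex at the first pop while the real play continues, and the rewinding trick from the forward direction is unavailable: the $\WaluJump$-edge is now the verifier's own move and may contradict $\tau'$. Walukiewicz's proof of this direction needs a genuinely different argument (a careful choice of the virtual predictor's predictions so that $\tau'$'s responses cover every outcome the real play can produce, or, equivalently via determinacy of pushdown parity games, a direct proof that a winning strategy of Player~$i$ in $\mathcal G$ yields one in $\mathcal G_i'$, with predictions computed from that strategy and the simulation restarted after each jump). As written, the reverse implication is asserted rather than proved.
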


Now, let us describe how a winning strategy $\strat$ for Player~$i$ in $\mathcal G$  
can be constructed from a positional winning strategy $\strat_i'$ for Player~$i$ 
in $\mathcal G_i'$.
The idea is to simulate $\strat_i'$ in $\mathcal G$. This works out fine as long as only skip- and push-transitions
are involved. As soon as the first pop-transition is used, $\strat_i'$ leads to a sink $\WaluWin_i$-vertex 
at which the future moves of $\strat_i'$ are no longer useful for playing in the original game $\mathcal G$.
To overcome this, the strategy $\strat$ uses a stack to store $\WaluClaim$-vertices visited during the simulated play. 
This allows us to reset the simulated play and to continue from the appropriate successor $\WaluJump$-vertex of the 
$\WaluClaim$-vertex stored on the stack.

Formally, let $G'|_{\strat_i'}=(V'|_{\strat_i'},V_0'|_{\strat_i'},V_1'|_{\strat_i'},E'|_{\strat_i'},v_\mathrm{in}')$
be the game graph of $\mathcal G_i'$ restricted to the vertices and edges visited by $\strat_i'$.
This implies that every vertex from $V_i'|_{\strat_i'}$ 
has a unique successor in $G'|_{\strat_i'}$ and that $\WaluWin_{1-i}$-vertices are not contained in $V_i'|_{\strat_i'}$.
The pushdown transducer $\mathcal T_\strat$ implementing $\strat$ is obtained from $\strat_i'$ 
by employing $G'|_{\strat_i'}$ for its finite control and 
the $\WaluClaim$-vertices 
as its stack symbols.

The PDT implementing $\strat$ is defined by $\mathcal T_\strat=(Q^\strat,\Gamma^\strat,\Delta^{\!\!\strat},q_\mathrm{in}^\strat,\Sigma_I^\strat,\Sigma_O^\strat, \lambda^{\!\!\!\strat})$,
where $Q^\strat=V'|_{\strat_i'}$, $\Gamma^\strat=\{v\in V'|_{\strat_i'} \ \mid v\text{ is a $\WaluClaim$-vertex}\}$, 
$q_\mathrm{in}^\strat=v_\mathrm{in}'$, $\Sigma_I^\strat=\Sigma_O^\strat=\Delta$.
To define $\Delta^{\!\!\strat}$, we first define the labeling $\ell\colon E'|_{\strat_i'} \rightarrow \Delta\cup\{\epsilon\}$
which assigns to every edge in $E'|_{\strat_i'}$ its corresponding transition $\delta\in\Delta$ by 
\[\ell(v,v')=
\begin{cases}
 (q,A,p,B) & \text{if } (v,v')=(\WaluCheck[q,A,P,c,d],\WaluCheck[p,B,P,c',d'])\, ,\\
 (q,A,p,BC) & \text{if } (v,v')=(\WaluCheck[q,A,P,c,d],\WaluPush[P,c,p,BC])\, ,\\
 (q,A,p,\epsilon) & \text{if } (v,v')=(\WaluCheck[q,A,P,c,d],\WaluWin_i[p])\, ,\\
 \epsilon & \text{otherwise.}
\end{cases}
\]
Now, the transition relation $\Delta^{\!\!\strat}$ is defined as follows: for every $(v,v')\in E'|_{\strat_i'}$, if $v$ is not a $\WaluClaim$-vertex
\emph{and}
$v'$ is not a $\WaluWin_i$-vertex, then $(v,Z,\ell(v,v'),v',Z)\in\Delta^{\!\!\strat}$, for every $Z\in\Gammabot^\strat$.
For the other cases, if $v$ is a $\WaluClaim$-vertex and $v'$ is a $\WaluCheck$-vertex, then $(v,Z,\ell(v,v'),v',vZ)\in\Delta^{\!\!\strat}$ 
for $Z\in\Gammabot^\strat$, i.e., the $\WaluClaim$-vertex $v$ is pushed onto the stack. And finally, if $(v,v')=(\WaluCheck[q,A,P,c,d],\WaluWin_i[p])$,
then $(v,Z, \ell(v,v'), \WaluJump[p,C,R,e,c], \epsilon)\in\Delta^{\!\!\strat}$ for every $Z\in\Gamma^\strat$ of the form $Z=\WaluClaim[R,e,q',BC,R']$, i.e.,
the topmost symbol $\WaluClaim[R,e,q',BC,R']$ is popped from the stack and the pushdown transducer proceeds to the state $\WaluJump[p,C,R,e,c]$
which would be reached in $G_i'|_{\strat_i'}$ if Player~$1-i$ 
would have chosen color $c$ and state $p\in R_c$ to determine the successor of $\WaluClaim[R,e,q',BC,R']$.
To complete the definition of $\mathcal T_\strat$, we define the output function $\lambda^{\!\!\!\strat}$ by 
$\lambda^{\!\!\!\strat}(v)= \ell(v,v')$ if $v\in V_i'|_{\strat_i'}$ is a $\WaluCheck$-vertex and  
$(v, v')\in E'|_{\strat_i'}$, i.e., the labeling of the edge chosen 
by $\strat_i'$ determines the output of $\mathcal T_\strat$.
Lemma~\ref{walu_scores_are_equivalent_to_stair_scores} shows this construction to be correct. 
\section{Main Theorem}
\label{sec_main}

In this section, we prove the equivalence between a 
pushdown game and the corresponding finite-time 
pushdown game for a certain threshold which is 
exponential in the size of the PDS defining the 
pushdown game. 
For a pushdown game $\mathcal G=(G,\col)$ induced by
$\mathcal P=(Q,\Gamma,\Delta, q_\mathrm{in})$ and 
$\col\colon Q \rightarrow [n]$, define 
$k_\mathcal{G}=|Q|\cdot|\Gamma|\cdot2^{|Q|\cdot n}\cdot n$, 
which is an upper bound on the number of $\WaluCheck$-vertices 
in $\mathcal G_i'$ of the same color.

\begin{theorem}
\label{thm_main}
Let $\mathcal G=(G, \col)$ be a pushdown game and let
$\mathcal G_k=(G, \col, k)$ be the corresponding 
finite-time pushdown game with threshold $k$. For every
$k>k_\mathcal{G}$, Player~$i$ wins $\mathcal G$ 
if and only if Player~$i$ wins~$\mathcal G_k$.
\end{theorem}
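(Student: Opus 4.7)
The plan is to reduce to Walukiewicz's finite-game equivalence (Theorem~\ref{thm_walu}) for one direction, and to exploit determinacy of finite-time pushdown games (which are effectively finite reachability games, since plays have bounded length by Lemma~\ref{every_k_is_reached_pushdownversion}) for the other. Both games have unique winners on every play, so it suffices to show that whenever Player~$i$ wins $\mathcal{G}$, he also wins $\mathcal{G}_k$; the converse then follows by applying this implication to Player~$1-i$ in the determined infinite game and observing that the two games cannot have different winners.

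For the nontrivial direction, assume Player~$i$ wins $\mathcal{G}$. By Theorem~\ref{thm_walu}, Player~$i$ wins $\mathcal{G}_i'$, and by positional determinacy of parity games he has a positional winning strategy $\strat_i'$. Since $\mathcal{G}_i'$ is a finite parity game with coloring into $[n+1]$, Remark~\ref{remark} guarantees that along every play consistent with $\strat_i'$ every score function $\Sc_c$ of a color $c$ with $\Parity(c)=1-i$ stays bounded by $|V'|_c$. By the definition of $k_\mathcal{G}$, we have $|V'|_c\leq k_\mathcal{G}< k$ for every color $c\in[n]$ of the original game (the auxiliary color $n$, carried only by non-$\WaluCheck$ vertices, is of the maximal parity and can be ignored when counting relevant losing colors). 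Now turn $\strat_i'$ into the pushdown strategy $\strat$ via the transducer $\mathcal T_\strat$ defined in Section~\ref{sec_walu}. The announced correspondence (Lemma~\ref{walu_scores_are_equivalent_to_stair_scores}) provides, for every play prefix $w$ of $\mathcal{G}$ consistent with $\strat$, a play prefix $u$ of $\mathcal{G}_i'$ consistent with $\strat_i'$ such that $\StairSc_c(w)=\Sc_c(u)$ for every color $c\in[n]$. Consequently, along every play $\rho$ of $\mathcal{G}$ consistent with $\strat$ and every color $c$ with $\Parity(c)=1-i$, we have $\MaxStairSc_c(\rho)\leq k_\mathcal{G}<k$.

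It remains to argue that this bound makes $\strat$ winning in $\mathcal{G}_k$. Any play in $\mathcal{G}_k$ consistent with $\strat$ is a finite prefix $w$; by Lemma~\ref{every_k_is_reached_pushdownversion}, once $|w|\geq 2^{k^n}$ some stair-score must reach $k$, so the finite-time game terminates. Since no stair-score of a color with parity $1-i$ ever reaches $k$, the first stair-score to hit $k$ must belong to a color $c$ with $\Parity(c)=i$, which by definition declares Player~$i$ the winner of that play. Hence $\strat$ is winning for Player~$i$ in $\mathcal{G}_k$, completing the forward direction; the backward direction follows by applying the same implication to Player~$1-i$ using determinacy of $\mathcal{G}$ (guaranteed by positional determinacy of parity games) and the fact that $\mathcal{G}_k$ has a unique winner on every play.

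The principal obstacle is the score/stair-score correspondence invoked as Lemma~\ref{walu_scores_are_equivalent_to_stair_scores}: one must set up a precise matching between prefixes of an actual pushdown play consistent with $\strat$ and prefixes of the simulating play in $G'|_{\strat_i'}$, compatible with both the stack operations of $\mathcal T_\strat$ (which pushes $\WaluClaim$-vertices on pushes and pops them on pops, simultaneously jumping to the appropriate $\WaluJump$-successor) and the bookkeeping in the fifth component of the $\WaluCheck$-vertices that records the minimal color of a skipped subplay. The design of the $\WaluJump$/$\WaluCheck$ color shift in the modified reduction is exactly what makes this matching equate $\StairSc_c$ in the pushdown play with $\Sc_c$ in the simulating play; verifying this requires a careful induction on stair positions of the pushdown play, handling push-pop pairs as the atomic step that corresponds to the jump shortcut in $\mathcal{G}_i'$.
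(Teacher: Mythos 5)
Your proof follows exactly the same route as the paper's: apply Theorem~\ref{thm_walu} and Remark~\ref{remark} to get a positional winning strategy in $\mathcal G_i'$ that bounds the losing player's scores by $k_\mathcal{G}$, transfer this bound to stair-scores in $\mathcal G$ via Lemma~\ref{walu_scores_are_equivalent_to_stair_scores}, conclude with Lemma~\ref{every_k_is_reached_pushdownversion} that the first stair-score to reach the threshold has parity $i$, and obtain the converse from determinacy of parity games. The argument is correct, and you even spell out the termination step and the treatment of the auxiliary color $n$ that the paper leaves implicit.
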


To prove this theorem, we need the following lemma which establishes a relation
between the values of the scoring functions of plays in $\mathcal G_i'$
and the values of the stair-scoring functions of corresponding plays in $\mathcal G$. 
Let $\strat_i'$ be a positional winning strategy for Player $i$ in $\mathcal G_i'$
and $\mathcal T_\strat$ the PDT implementing the corresponding pushdown winning strategy $\strat$ for Player $i$ in 
$\mathcal G$ as defined in the previous section. For a play prefix $w(0)\cdots w(r)\in V^+$,
define $\lastStrictBump(w)=w$ if $\sh(w(r))=0$, and otherwise
$\lastStrictBump(w)=w(l+1)\cdots w(r)$ where $l$ 
is the greatest position such that $\sh(w(l))< \sh(w(r))$.

\begin{lemma}
\label{walu_scores_are_equivalent_to_stair_scores}
For every play prefix~$w$ in $\mathcal G$ that is consistent with $\sigma$, 
there is a play prefix~$w'$ in $\mathcal G_i'$ that is consistent with 
$\sigma_i'$ such that $\StairSc_c(w) = \Sc_c(w')$ for every $c \in [n]$.
\end{lemma}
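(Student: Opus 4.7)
The plan is to construct $w'$ directly from $w$ based on its stair structure and verify the desired properties by induction over the stair positions. Let $n_0 < n_1 < \cdots < n_K$ denote the stair positions of $w$, where $n_0 = 0$ and $n_K = |w|-1$. I would build $w'$ as $v^{(0)} \cdot s^{(1)} \cdots s^{(K)}$, where $v^{(0)} = v_\mathrm{in}'$ is the initial $\WaluCheck$ vertex and each $s^{(j)}$ covers the stretch of $w$ between $n_{j-1}$ and $n_j$. Between two consecutive stair positions exactly one of three things happens: (A)~$n_j = n_{j-1}+1$ and the connecting transition is a skip; (B)~$n_j = n_{j-1}+1$ and it is a push (in this case the push is necessarily unmatched within $w$, since $n_j$ being a stair means no later position has strictly smaller stack height); or (C)~$n_j > n_{j-1}+1$, in which case the transition out of $n_{j-1}$ is a push, the transition into $n_j$ is the matching pop, and a balanced bump sits in between. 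In case~(A) $s^{(j)}$ is the single skip edge into the appropriate successor $\WaluCheck$; in case~(B) I use the ``verify'' route $\WaluPush \to \WaluClaim \to \WaluCheck$; in case~(C) I use the ``jump'' route $\WaluPush \to \WaluClaim \to \WaluJump \to \WaluCheck$, where the $\WaluJump$ vertex's color component is set to $\MinCol(w(n_{j-1}+1) \cdots w(n_j-1))$ and its state component is set to the state of $w(n_j)$. Every choice at a Player~$i$ vertex (the $\WaluPush$ vertices and the $\WaluCheck$ vertices whose state lies in $Q_i$) is taken according to $\strat_i'$; Player~$1-i$'s choices at $\WaluClaim$ vertices are ``verify'' in case~(B) and ``jump'' in case~(C).

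The invariant to prove by induction on $j$ is that $w'^{(j)} := v^{(0)} \cdot s^{(1)} \cdots s^{(j)}$ is a play prefix of $\mathcal G_i'$ consistent with $\strat_i'$ that ends at a vertex $\WaluCheck[q, A, P, c, d]$ whose first two components match the state and stack top of $w(n_j)$, whose fifth component satisfies $d = \MinCol(\lastBump(w(0)\cdots w(n_j)))$, and that also satisfies $\Sc_{c'}(w'^{(j)}) = \StairSc_{c'}(w(0)\cdots w(n_j))$ for every $c' \in [n]$. The score equality is an immediate consequence of the invariant on $d$: the auxiliary vertices $\WaluPush$, $\WaluClaim$, and $\WaluJump$ all carry color $n$ and therefore leave $\Sc_{c'}$ unchanged for every $c' \in [n]$, so each segment $s^{(j)}$ contributes exactly one color-bearing vertex whose color matches the driving value $\MinCol(\lastBump(w(0)\cdots w(n_j)))$ of the stair-scoring recurrence at the stair position $n_j$. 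Taking $j = K$ yields the claim, since $n_K = |w|-1$ implies $w'^{(K)}$ corresponds to all of $w$.

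The main obstacle is case~(C), where I must justify that the chosen jump edge $\WaluClaim[P, c, p_s, BC, R] \to \WaluJump[p_r, C, P, c, e]$ actually exists in $\mathcal G_i'$; by definition this requires $p_r \in R_e$, where $R$ is the prediction just made by $\strat_i'$ at the preceding $\WaluPush$. Here the hypothesis that $\strat_i'$ is a \emph{winning} strategy is essential. Consider the alternative play of $\mathcal G_i'$ in which Player~$1-i$ verifies at every $\WaluClaim$ inside the bump, and at each $\WaluCheck$ that Player~$1-i$ controls mimics $w$'s actual transition. This alternative play is still consistent with $\strat_i'$, and a straightforward induction on the nesting depth of pushes shows that just before the matching pop is simulated, the fourth component of the pre-pop $\WaluCheck$ equals exactly $e$, the minimum color accumulated over the whole strictly-interior bump. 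Since $\strat_i'$ is winning, this pop-transition edge cannot land in $\WaluWin_{1-i}[p_r]$, so by the definition of the pop edges it must land in $\WaluWin_i[p_r]$, yielding the required membership $p_r \in R_e$ and completing the case analysis.
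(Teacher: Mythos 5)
Your decomposition along stair positions is essentially the paper's construction read off after unfolding its induction (a skip edge, the verify route for an unmatched push, the jump route over a bump), but the execution of the crucial case~(C) has a genuine flaw. The alternative play you invoke, in which Player~$1-i$ \emph{verifies at every} $\WaluClaim$ inside the bump, cannot reach the matching pop of the outer bump unless the bump has nesting depth one: at the first \emph{inner} pop the simulated play enters a sink $\WaluWin$-vertex and stops, and just before that pop the fourth component of the current $\WaluCheck$-vertex is the minimum color seen since the \emph{innermost} unmatched push, not the minimum over the whole strictly-interior bump. So the winning-strategy argument as you state it only yields membership of the popped state in the \emph{innermost} prediction, i.e., it settles depth-one bumps. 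For deeper bumps the alternative play must verify only the outermost $\WaluClaim$ and traverse the inner sub-bumps by jump routes, and knowing that those inner jump edges exist, and that they land on $\WaluCheck$-vertices carrying the correct prediction and minimum-color components, is exactly the statement being proved at smaller depth; your ``straightforward induction on the nesting depth'' is incompatible with the play you actually describe and must carry a stronger hypothesis to go through.

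This is tied to a second gap: your invariant records only the first two components and the color~$d$ of the last $\WaluCheck$-vertex, which is too weak. The moves of $w$ at Player-$i$ configurations are produced by the transducer $\mathcal T_\strat$, and $\strat_i'$ is positional on the \emph{full} vertices of $\mathcal G_i'$; hence to conclude that the edge $\strat_i'$ picks at the end of your constructed prefix is labelled by the transition actually taken in $w$ (and that the prediction chosen at the $\WaluPush$-vertex is the one the transducer committed to) you need the end vertex to be the transducer's current state, including its prediction component and its fourth component $\MinCol(\lastStrictBump(w))$, and you need to relate the $\WaluClaim$-vertex stored on the transducer's stack at the opening push of a bump to the prediction component of the $\WaluCheck$-vertex reached just before the closing pop (the identity $R=P$). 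This bookkeeping is precisely the strengthened induction hypothesis, requirements~(ii) and~(iii), in the paper's proof of Lemma~\ref{walu_scores_are_equivalent_to_stair_scores}; without it neither the claim that your segments track $w$'s transitions nor the existence of the outer jump edge can be derived.
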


\begin{proof}
By induction over $|w|$. To prove our claim, we strengthen the 
induction hypothesis as follows: for every play prefix~$w$ in $\mathcal G$ 
that is consistent with $\sigma$, there is a play prefix~$w'$ in 
$\mathcal G_i'|_{\strat_i'}$ (which is consistent with $\strat_i'$ by construction) such that the following 
requirements are satisfied: let $\last(w) = (q, A\gamma)$.

\begin{enumerate}[(i)]

\item\label{req_score} $\StairSc_c(w) = \Sc_c(w')$ for every $c \in [n]$.

\item\label{req_lastvert} $\last(w')=\WaluCheck[q, A, P, c, d]$ for some $P\in\Prediction$, $d\in[n]$ and $c=\MinCol(\lastStrictBump(w))$.

\item\label{req_stack} Let $(v, \gamma_\strat)$ be the last 
configuration of the run of $\mathcal T_\strat$ on the sequence of 
transitions induced by~$w$. Furthermore, if $\gamma_\strat\neq\bot$, let 
$\gamma_\strat(j) = \WaluClaim[P_j, c_j, p_j, B_jC_j, R_j]$ for every $0 \le j \le |\gamma_\strat|-2$.
We require $v = \last(w')$, $C_0 \cdots C_k=\gamma$ where $k=|\gamma_\strat|-2$, and $R_0=P$.


\end{enumerate}

For the induction start, we have $w = v_{\mathrm{in}} =(q_{\mathrm{in}}, \bot )$. 
Let $w' = v_\mathrm{in}' = \WaluCheck[q_\mathrm{in}, \bot, P^\mathrm{in}, \col(q_\mathrm{in}),\col(q_\mathrm{in})]$. 
Since $\col (v_\mathrm{in}) = \col'(v_\mathrm{in}') = \col(q_\mathrm{in})$, we have $\StairSc_c(w) = \Sc_c(w')$ for every $c \in [n]$. Moreover, requirements 
(\ref{req_lastvert}) and (\ref{req_stack}) are satisfied as well.

Now, let $w = w(0)\cdots w(r)$ with $r>0$ and $w(r-1) = (q, A\gamma)$. Moreover, let $\Reset(w)=w(0)\cdots w(s)$ and $w(s) = (q_s, A_s\gamma_s)$.  
The induction hypothesis yields play prefixes $u'$ and $u_s'$ in $\mathcal G_i'|_{\strat_i'}$ 
such that we have
$\StairSc_c(w(0)\cdots w(r-1)) = \Sc_c(u')$ and $\StairSc_c(w(0)\cdots w(s)) = \Sc_c(u_s')$, for every $c \in [n]$. 
Also, for some $P,P_s\in\Prediction$ and $d,d_s\in[n]$,
$\last(u')=\WaluCheck[q, A, P, c, d]$ and 
$\last(u_s')=\WaluCheck[q_s, A_s, P_s, c_s, d_s]$ with 
$c=\MinCol(\lastStrictBump(w(0)\cdots w(r-1)))$ and $c_s=\MinCol(\lastStrictBump(w(0)\cdots w(s)))$. 
We distinguish three cases, whether the transition from $w(r-1)$ 
to $w(r)$ is a skip-, push-, or pop-transition.

In case of a skip-transition~$\delta=(q, A, p, B)$, we have 
$w(r) = (p, B\gamma)$.
By construction, there is also an edge from 
$\last(u')=\WaluCheck[q, A, P, c, d]$ to the vertex 
\[v = \WaluCheck[p, B, P, \Min\{c,\col(p)\}, \col(p)]\] in $\mathcal{G}_i'|_{\strat_i'}$
labeled by $\ell(\last(u'),v)=\delta$. 
Thus, let $w' = u'v$. This choice satisfies requirement (\ref{req_lastvert}), as for a skip-transition from $w(r-1)$ to $w(r)$ it holds
\begin{align*}
\MinCol(\lastStrictBump(w)) & = \Min\{\MinCol(\lastStrictBump(w(0)\cdots w(r-1))),\col(w(r))\} \\
 & = \Min\{c,\col(p)\}\enspace.
\end{align*}
Furthermore, requirement (\ref{req_stack}) is satisfied, since when processing $\delta$, 
$\mathcal T_\strat$ changes its state $\last(u')$ to $v$ while the stack is left unchanged.
To prove the equality of the scores, let $e = \col(w(r))$, 
which is also the color of $v$ in $\mathcal{G}_i'|_{\strat_i'}$. Then, we have 
$\StairSc_e(w) = \StairSc_e(w(0)\cdots w(r-1)) +1 = \Sc_e(u') +1 = \Sc_e(w')$, 
and for $e' < e$, $\StairSc_{e'}(w) = \StairSc_{e'}(w(0)\cdots w(r-1)) = \Sc_{e'}(u') = \Sc_{e'}(w')$.
Finally, for $e'> e$, we have $\StairSc_{e'}(w) = 0 = \Sc_{e'}(w')$. 

In case of a push-transition~$\delta=(q,A, p, BC)$, we 
have $w(r) = (p, BC\gamma)$.
Consider the finite path 
\begin{equation*}
u''  = \WaluPush[P, c, p, BC] \rightarrow \WaluClaim[P, c, p, BC, R] \rightarrow \WaluCheck[p, B, R, \col(p), \col(p)]
\end{equation*}
in $\mathcal G_i'|_{\strat_i'}$ where $R$ is the prediction picked by $\strat_i'$. 
Notice that there is indeed an edge from $\last(u')$ to $\WaluPush[P, c, p, BC]$ in $E'|_{\strat_i'}$. 
We claim that $w' = u'u''$ has the desired properties. Requirement (\ref{req_lastvert})
is satisfied, as $\lastStrictBump(w)=w(r)$ in this case, and $\MinCol(w(r))=\col(p)$. Furthermore, 
$\WaluClaim[P, c, p, BC, R]$ is pushed onto the stack of $\mathcal T_\strat$ 
when processing~$\delta$. Hence, requirement (\ref{req_stack}) is satisfied.
  

The scores evolve as in the case of a skip-transition explained above, since 
in both cases we have $\lastBump(w) = w(r)$, and $u''$ contains exactly 
one vertex with color in $[n]$, namely its last vertex, which has the same color 
as $w(r)$. The intermediate auxiliary vertices have color~$n$ and therefore do not influence the 
scores we are interested in.

Finally, the case of a pop-transition is the most involved one, since a play in $\mathcal{G}_i'|_{\strat_i'}$ 
ends in a sink vertex, as soon as a pop-transition is simulated. In this case, 
$\mathcal{T}_\sigma$ uses the top $\WaluClaim$-vertex stored on its stack to
determine the appropriate $\WaluCheck$-vertex for being able to continue playing 
according to $\strat_i'$. 
Suppose the transition is $\delta = (q,A, p, \epsilon)$, i.e., we have 
$w(r) = (p, \gamma)$. 
Let $\delta_s=(q_s,A_s,q',BC)$ be the push-transition (of the PDS underlying $\mathcal G$) which induces
the edge $(w(s),w(s+1))\in E$. Note that $C\gamma_s=\gamma$, since the stack content $C\gamma_s$ 
remains untouched until $\delta$ is executed from $w(r-1)$ to $w(r)$. Hence, $w(r)=(p,C\gamma_s)$.
By definition of $\sigma$, there is an edge from $\last(u')=\WaluCheck[q, A, P, c, d]$ to 
$\WaluWin_i[p]$ in $E'|_{\strat_i'}$ such that $p \in P_c$.

Now, consider the run of $\mathcal T_\strat$ on $w$. By construction, 
the transducer pops the top $\WaluClaim$-vertex~$v$ from its stack while 
processing the transition $\delta$. We show that $v = \WaluClaim[P_s, c_s, q', BC, P]$. First, notice that $v$ was pushed onto the stack
while processing the transition from $w(s)$ to $w(s+1)$ which is induced by $\delta_s$. Applying the induction hypothesis shows that the run of 
$\mathcal T_\strat$ on the sequence of transitions induced by $w(0)\cdots w(s)$ ends in state $\last(u_s')=\WaluCheck[q_s,A_s,P_s,c_s,d_s]$ with some stack content
$\gamma_\strat \in(\Gamma^\strat)^+\bot$ satisfying the above requirements. 
Since now $\delta_s$ is to be processed, the run of $\mathcal T_\strat$ is continued as follows for some $R\in\Prediction$:
\begin{align*}
(\last(u_s'),\gamma_\strat) & \overset{\delta_s}{\mapstochar\relbar}(\WaluPush[P_s, c_s, q', BC],\gamma_\strat) 
\overset{\epsilon}{\mapstochar\relbar}(\WaluClaim[P_s, c_s, q', BC, R],\gamma_\strat)\\
& \overset{\epsilon}{\mapstochar\relbar}(\WaluCheck[q', B, R, \col(q'), \col(q')], \WaluClaim[P_s, c_s, q', BC, R]\cdot\gamma_\strat )
\end{align*}
It remains to show that $R=P$, which is done by applying the induction hypothesis to the run of 
$\mathcal T_\strat$ on transitions induced by $w(0)\cdots w(r-1)$. The top symbol $\WaluClaim[P_s, c_s, q', BC, R]$, which is pushed on the stack
while processing $(w(s),w(s+1))$, remains untouched until $w(r-1)$ is reached and is again the top symbol after processing $(w(r-2),w(r-1))$. 
However, since $\last(u')=\WaluCheck[q,A,P,c,d]$ is the state reached by $\mathcal T_\strat$ after processing $w(0)\cdots w(r-1)$ it follows from 
requirement (\ref{req_stack}) that $R=P$.

Consider the following finite path in $\mathcal G_i'|_{\strat_i'}$:
\[u''= \WaluPush[P_s, c_s, q', BC] \rightarrow v \rightarrow \WaluJump[p,C,P_s,c_s,c] \rightarrow \WaluCheck[p,C,P_s,\Min\{c_s,c,\col(p)\},\Min\{c,\col(p)\}].\]
Notice that there is an edge from $\last(u_s')$ to $\WaluPush[P_s, c_s, q', BC]$ in $E'|_{\strat_i'}$. 
So, we can show that $w'=u_s'u''$ satisfies the above requirements.
Requirement (\ref{req_lastvert}) is satisfied, since
\begin{align*}
& \,\MinCol(\lastStrictBump(w))\\
= & \,\Min\{\MinCol(\lastStrictBump(w(0)\cdots w(s))),\MinCol(w(s+1)\cdots w(r-1)),\col(w(r))\}\\
= & \,\Min\{c_s,\MinCol(\lastStrictBump(w(0)\cdots w(r-1))),\col(p)\}\\
= & \,\Min\{c_s,c,\col(p)\}\enspace.
\end{align*}
Requirement (\ref{req_stack}) is satisfied, since after processing $\delta$ by $\mathcal T_\strat$, the top stack symbol $v$ is popped
from the stack and the state $\WaluCheck[p,C,P_s,\Min\{c_s,c,\col(p)\},\Min\{c,\col(p)\}]$ is reached. By doing so, the same stack
content is reestablished as after the run of $\mathcal T_\strat$ on $\Reset(w)$. Hence, by applying the induction hypothesis, we have $C_0\cdots C_k=\gamma_s$.
Since we have $\gamma=C\gamma_s$, this suffices.
To show requirement (\ref{req_score}), let 
\begin{align*}
e & =\MinCol(\lastBump(w))\\
  & =\Min\{\MinCol(\lastStrictBump(w(0)\cdots w(r-1))),\col(w(r))\}\\
  & =\Min\{c,\col(p)\}\enspace. 
\end{align*}
Notice that $e$ is also the color of  $\last(w')=\WaluCheck[p,C,R,\Min\{c_s,c,\col(p)\},\Min\{c,\col(p)\}]$ in $\mathcal{G}_i'|_{\strat_i'}$. Thus, 
$\StairSc_e(w) = \StairSc_e(w(0)\cdots w(s)) +1 = \Sc_e(u_s') +1 = \Sc_e(w')$
and for $e' < e$
$\StairSc_{e'}(w) = \StairSc_{e'}(w(0)\cdots w(s)) = \Sc_{e'}(u_s') = \Sc_{e'}(w')$.
Finally, if $e'> e$, $\StairSc_{e'}(w) = 0 = \Sc_{e'}(w')$. 
\end{proof}

Now, the proof of Theorem \ref{thm_main} is straightforward. 

\begin{proof}[Proof of Theorem \ref{thm_main}]
Assume that Player $i$ wins $\mathcal G$, then he also wins $\mathcal G_i'$ due to Theorem~\ref{thm_walu}. 
For every color $c\in[n]$, there are at most 
$k_\mathcal{G}$ $\WaluCheck$-vertices colored by $c$. Hence, due to Remark~\ref{remark} 
there is a positional winning 
strategy $\strat_i'$ in $\mathcal G_i'$ for Player~$i$ such 
that for every $c\in[n]$ with $\Parity(c)=1-i$, $\MaxSc_c(\rho')\leq k_\mathcal{G}$, 
for every play $\rho'$ which is consistent with 
$\strat_i'$. 
From Lemma \ref{walu_scores_are_equivalent_to_stair_scores}, it follows that
the pushdown strategy $\strat$ which is 
constructed from $\strat_i'$ bounds the stair-scores 
of Player $1-i$ by $k_\mathcal{G}$. Thus, for every
play $\rho$ which is consistent with $\strat$ and every $k>k_\mathcal{G}$, 
there exists $w\sqsubset \rho$ such that $w$ is winning for Player~$i$ in $\mathcal G_k$.
Hence, using the same strategy $\strat$ Player $i$ wins every finite-time game $\mathcal G_k$ for $k>k_\mathcal{G}$.
The other direction follows by determinacy of parity games.
\end{proof}
\section{Lower Bounds}
\label{sec_lowerbounds}

In the previous section, we proved the equivalence between pushdown games and
corresponding finite-time pushdown games with an exponential threshold. In this
section, we present an (almost) matching lower bound on the threshold that
always yields equivalent games. To this end, we construct a pushdown game in
which the winning player is forced to reach a configuration of \emph{high} stack
height while only visiting states colored by a bad color for him. Thereby, the
opponent is the first player to reach \emph{high} stair-scores, although he loses the play
eventually.

\begin{theorem}
There are a family of pushdown games~$(G_n,\col_n)$ and thresholds~$k_n$
exponential in the cubic root of the size of the underlying PDS such that for
every $n>0$, Player~$0$ wins the pushdown game~$(G_n,\col_n)$, but for every
$k\leq k_n$, Player~$1$ wins the finite-time pushdown game~$(G_n,\col_n, k)$.
\end{theorem}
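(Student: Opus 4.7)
The plan is to exhibit, for each $n$, a pushdown system $\mathcal{P}_n$ of size $\Theta(n^3)$ whose induced parity game $(G_n,\col_n)$ is won by Player~$0$, yet for which Player~$1$ has a strategy in the finite-time variant forcing some odd-color stair-score to reach $k_n=2^n$ before any even-color stair-score reaches $k$. Since $k_n=2^{\Theta(|\mathcal{P}_n|^{1/3})}$, this gives the claimed lower bound.

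At the heart of the construction is a recursive ``counter'' gadget that compels any play through it to push $\Theta(2^n)$ marker symbols onto the stack, visiting only states colored by odd color~$1$ during the process. A level-$i$ routine invokes a level-$(i-1)$ routine twice, interleaved with a marker push, so that the level-$n$ routine accumulates $2^n$ marker pushes. Return addresses for the recursion are encoded in auxiliary stack symbols placed immediately above the running marker and popped before the enclosing level resumes, so the marker symbols themselves appear in a strictly increasing sequence of stack heights and are never popped below during the gadget's execution. Consequently each marker position is a stair position, and $\StairSc_1$ is incremented by $1$ at every push, reaching $2^n$ by the time the gadget completes. A color-$0$ state sits on the unique exit of the gadget and loops the play back into a fresh gadget invocation.

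With this structure, every infinite play visits color~$0$ infinitely often, so the parity condition makes Player~$0$ the winner of $(G_n,\col_n)$. In the finite-time game $(G_n,\col_n,k)$ with $k\le 2^n$, the play is cut off midway through the very first count-up: $\StairSc_1$ reaches $k$ before any color-$0$ bump has occurred, and Player~$1$ is declared the winner. The delicate part will be the distribution of vertices between the two players inside the gadget: Player~$0$ must not be able to short-circuit the recursion by skipping to the exit, yet Player~$1$ must not be able to trap the play inside the gadget forever and thereby convert the infinite-duration game into an odd-color win. Making the two recursive sub-calls mandatory leaves Player~$1$ no ``avoid-recursion'' choice, while allocating Player~$0$'s vertices only to forced tie-breaking junctions ensures that every continuation still exits through the color-$0$ state. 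A final bookkeeping argument counting states, stack symbols, and transitions establishes $|\mathcal{P}_n|=\Theta(n^3)$, finishing the proof.
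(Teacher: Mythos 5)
Your construction hinges on a recursive gadget that \emph{forces} the play to push $2^n$ marker symbols, none of which is ever popped during the gadget, while the recursion's return information is kept ``immediately above the running marker and popped before the enclosing level resumes.'' This is where the proposal breaks down. Any bookkeeping symbol pushed at the start of a level-$i$ call lies \emph{below} every marker subsequently pushed by its sub-calls, so it can only become the top of the stack again after those markers are popped --- contradicting the requirement that markers survive the gadget; and if the bookkeeping symbol is popped before the sub-call pushes its markers, it cannot transmit the return information across the sub-call at all. Moving that information into the finite control instead requires remembering, for each of up to $n$ nested levels, which of its two sub-calls is active, i.e., exponentially many states. More generally: if markers are never popped, then everything on the stack at or below the topmost marker is unreadable for the rest of the gadget, so the only information carried from one marker push to the next is the control state together with the marker symbol. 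Since you want the count-up to be \emph{mandatory} (neither player may deviate), the behaviour between markers is essentially deterministic, and after at most $|Q|\cdot|\Gamma|$ (polynomially many) marker pushes the configuration pattern repeats; hence the gadget either exits after polynomially many markers or never reaches its colour-$0$ exit. It therefore cannot both force $\StairSc_1$ up to $2^n$ via never-popped markers \emph{and} guarantee the colour-$0$ exit that Player~$0$ needs to win the infinite-duration game. The unsubstantiated $\Theta(n^3)$ size claim is a secondary symptom of the same problem.

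The paper's proof avoids exactly this obstacle by not building any counting device at all: Player~$0$ himself chooses how many $A$'s to push (accumulating stair-score for the odd colour~$1$ as he does so), and the exponential lower bound on the number of pushes he needs is enforced by the winning condition rather than by the structure of the gadget --- Player~$1$ then selects one of the first $n$ primes $p_i$, and a cyclic counter of size $p_i$ checks divisibility during the forced pop phase, so Player~$0$ wins the infinite game only if his chosen height is a multiple of $k_n=\prod_{i\le n}p_i\ge 2^n$. This Chinese-remainder-style verification needs only $O(n^2\log n)$ states, which is what makes $k_n$ exponential in the cube root of the size. If you want to salvage your approach, you would need some mechanism of this ``choose freely, verify cheaply afterwards'' kind; a structurally forced exponential count-up with unpopped markers and a polynomial-size PDS is not available.
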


\begin{proof}
We denote the $i$-th prime number by $p_i$. For $n>0$, let $k_n = \prod_{i=1}^n p_i$ and define the PDS $\mathcal P_n=(Q_n,\{A\},\Delta_n, q_\mathrm{in})$ as follows:
$Q_n = \{q_\mathrm{in}, q_\Box\} \cup \bigcup_{i=1}^{n} M_i$,
where $M_i=\{q_i^j \mid 0 \leq j < p_i\}$,
and $\Delta_n$ consists of the following transitions:
\begin{itemize}
 \item $(q_\mathrm{in}, X, q_\mathrm{in}, AX)$ and $(q_\mathrm{in}, X, q_\Box,
AX)$ for every $X\in\{A,\bot\}$,
 \item $(q_\Box, A, q_i^0, A)$ for every $1 \leq i \leq n$,
 \item $(q_i^j, A, q_i^{\ell}, \varepsilon)$, where $\ell = (j+1) \bmod p_i$, and
 \item $(q, \bot, q, \bot)$, for every $q\in Q_n\setminus\{q_\mathrm{in}\}$.
\end{itemize}
To specify the partition of $Q_n$, let $q_\Box$ belong to Player~$1$. All other
states are Player~$0$ states. The coloring is given by
$\col_n(q_i^0)= 0$ for every $1\leq i \leq n$ and $\col_n(q) =1$ for every
other state~$q$. We have $k_n \ge 2^n$ and $|Q_n|$ can be bounded from above by $\mathcal{O}(n^2\log(n))$. Hence,
$k_n$ is exponential in the cubic root of $|Q_n|$.
The pushdown game $(G_2,\col_2)$ is depicted in
Figure~\ref{fig_pushdown_game_G_2}. Double-lined vertices are those colored by~$0$. 

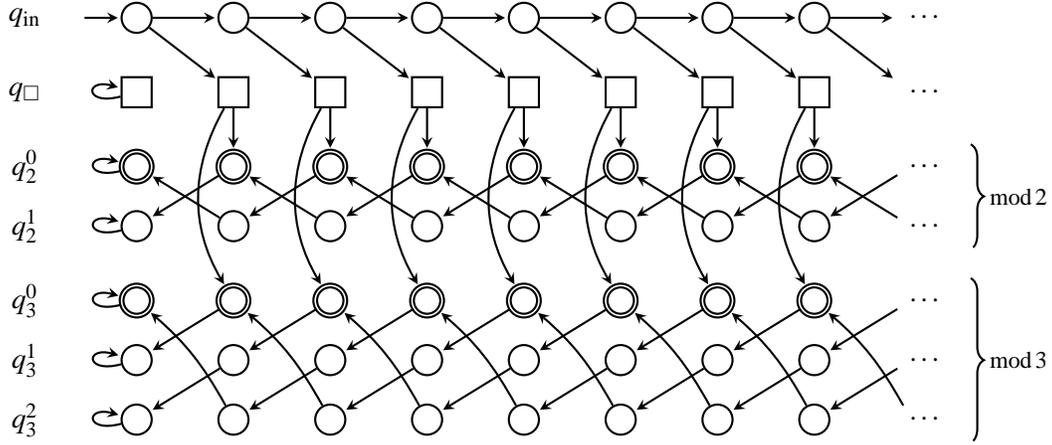
\begin{figure}
\begin{center}
\begin{tikzpicture}[scale=.99]

\foreach \x in {0,...,7}
  \node[p0] at (1.3*\x,0) (a\x) {};

\node at (10.6,0) (aend) {$\cdots$};
\node at (10.6,-1) (bend) {$\cdots$};
\node at (10.6,-2) (mend) {$\cdots$};
\node at (10.6,-2.8) (mmend) {$\cdots$};
\node at (10.6,-3.8) (mmmmend) {$\cdots$};
\node at (10.6,-4.6) (mmmmmend) {$\cdots$};
\node at (10.6,-5.4) (mmmmmmend) {$\cdots$};

\foreach \x in {0,...,7}
  \node[p1] at (1.3*\x,-1) (b\x) {};

\node[p0, double] at (0,-2) (m200) {};

\foreach \x in {1,...,7}
    \node[p0, double] at (1.3*\x,-2) (m20\x) {};

\foreach \x in {0,...,7}
    \node[p0] at (1.3*\x,-2.8) (m21\x) {};


\node[p0, double] at (0,-3.8) (m300) {};

\foreach \x in {1,...,7}
    \node[p0, double] at (1.3*\x,-3.8) (m30\x) {};

\foreach \x in {0,...,7}
    \node[p0] at (1.3*\x,-4.6) (m31\x) {};

\foreach \x in {0,...,7}
    \node[p0] at (1.3*\x,-5.4) (m32\x) {};


\node[p0, draw=none] at (1.3*8,0) (a8) {};
\node[p1, draw=none] at (1.3*8,-1) (b8) {}; 
\node[p0, draw=none] at (1.3*8,-2) (m208) {}; 
\node[p0, draw=none] at (1.3*8,-2.8) (m218) {}; 
\node[p0, draw=none] at (1.3*8,-3.8) (m308) {}; 
\node[p0, draw=none] at (1.3*8,-4.6) (m318) {}; 
\node[p0, draw=none] at (1.3*8,-5.4) (m328) {};

\node at (-1.5, 0) (alabel) {$q_\mathrm{in}$};
\node at (-1.5,-1) (blabel) {$q_\Box$};
\node at (-1.5,-2) (clabel) {$q_2^0$};
\node at (-1.5,-2.8) (alabel) {$q_2^1$};
\node at (-1.5,-3.8) (blabel) {$q_3^0$};
\node at (-1.5,-4.6) (clabel) {$q_3^1$};
\node at (-1.5,-5.4) (clabel) {$q_3^2$};

\path[draw, thick]
(a0) edge (a1)
(a1) edge (a2)
(a2) edge (a3)
(a3) edge (a4)
(a4) edge (a5)
(a5) edge (a6)
(a6) edge (a7)
(a7) edge (a8);

\path[draw, thick]
(a0) edge (b1)
(a1) edge (b2)
(a2) edge (b3)
(a3) edge (b4)
(a4) edge (b5)
(a5) edge (b6)
(a6) edge (b7)
(a7) edge (b8);

\path[draw, thick]
(b1) edge[bend left=0] (m201)
(b1) edge[bend right=30] (m301)
(b2) edge[bend left=0] (m202)
(b2) edge[bend right=30] (m302)
(b3) edge[bend left=0] (m203)
(b3) edge[bend right=30] (m303)
(b4) edge[bend left=0] (m204)
(b4) edge[bend right=30] (m304)
(b5) edge[bend left=0] (m205)
(b5) edge[bend right=30] (m305)
(b6) edge[bend left=0] (m206)
(b6) edge[bend right=30] (m306)
(b7) edge[bend left=0] (m207)
(b7) edge[bend right=30] (m307);

\path[draw, thick]
(m201) edge (m210)
(m202) edge (m211)
(m203) edge (m212)
(m204) edge (m213)
(m205) edge (m214)
(m206) edge (m215)
(m207) edge (m216)
(m208) edge (m217);

\path[draw, thick]
(m211) edge (m200)
(m212) edge (m201)
(m213) edge (m202)
(m214) edge (m203)
(m215) edge (m204)
(m216) edge (m205)
(m217) edge (m206)
(m218) edge (m207);


\path[draw, thick]
(m301) edge (m310)
(m302) edge (m311)
(m303) edge (m312)
(m304) edge (m313)
(m305) edge (m314)
(m306) edge (m315)
(m307) edge (m316)
(m308) edge (m317);

\path[draw, thick]
(m311) edge (m320)
(m312) edge (m321)
(m313) edge (m322)
(m314) edge (m323)
(m315) edge (m324)
(m316) edge (m325)
(m317) edge (m326)
(m318) edge (m327);

\path[draw, thick]
(m321) edge[bend right = 10] (m300)
(m322) edge[bend right = 10] (m301)
(m323) edge[bend right = 10] (m302)
(m324) edge[bend right = 10] (m303)
(m325) edge[bend right = 10] (m304)
(m326) edge[bend right = 10] (m305)
(m327) edge[bend right = 10] (m306)
(m328) edge[bend right = 10] (m307);


\path[draw, thick]
(b0) edge[loop left] ()
(m200) edge[loop left] ()
(m210) edge[loop left] ()
(m300) edge[loop left] ()
(m310) edge[loop left] ()
(m320) edge[loop left] ();

\path[draw, thick]
(-.7,0) edge (a0);

\draw[decorate, thick, decoration={brace,amplitude=5}] (11.2,-1.7) -- node[auto, xshift =2]{\footnotesize$\bmod 2$} (11.2,-3.1);

\draw[decorate, thick, decoration={brace,amplitude=5}] (11.2,-3.5) -- node[auto, xshift =2]{\footnotesize$\bmod 3$} (11.2,-5.7);

\end{tikzpicture}
\end{center}
\caption{Pushdown Game $(G_2, \col_2)$}
\label{fig_pushdown_game_G_2}
\end{figure}

A play in the game $(G_n,\col_n)$ proceeds as follows. Player~$0$ picks a
natural number $x > 0$ by moving the token to the configuration $(q_\Box,
A^{x}\bot)$. If he fails to do so by staying in state $q_\mathrm{in}$ ad
infinitum he loses, since $\col_n(q_\mathrm{in})=1$. At $(q_\Box, A^{x}\bot)$,
Player~$1$ picks a modulus $p_i \in\{p_1,\ldots,p_n\}$ by moving the token to
$(q^0_i, A^x\bot)$. From this configuration, a single path emanates, i.e., there
is only one way to continue the play. Player~$0$ wins this play if and only if $x \bmod p_i
= 0$. Hence, Player~$0$ has a winning strategy for this game by moving the token
to some non-zero multiple of $k_n$, i.e., Player~$0$ wins $(G_n,\col_n)$.

Now, let $k \leq k_n$. If Player~$0$ reaches $(q_\mathrm{in}, A^{k-1}\bot)$,
then he loses the finite-time pushdown game $(G_n,\col_n,k)$, since in this
case Player~$1$ reaches stair-score $k$ for color~$1$. On the other hand, if he
moves the token to a configuration $(q_\Box, A^x\bot)$ for some $x \le k-1$,
then there is a $p_i \in \{p_1,\ldots,p_n\}$ such that $x \bmod p_i \neq 0$, as $x <
k_n$. Hence, assume Player~$1$ moves the token to $(q_i^0, A^x\bot)$. Then, the
play ends in a self-loop at a configuration~$(q_i^m, \bot)$ for some $m\neq 0$.
The path $w$ from $(q_\mathrm{in}, \bot)$ to $(q_i^m,\bot)$ via $(q_\Box, A^x)$
satisfies $\MaxStairSc_0(w) \le x$. Since $q_i^m$ is colored by $1$, the scores
of Player~$0$ are never increased while using the self-loop at $(q_i^m,\bot)$.
Thus, his scores never reach the threshold $k$. Hence, Player~$1$ is the first
to reach this threshold, since Lemma~\ref{every_k_is_reached_pushdownversion}
guarantees that there is some color that reaches the threshold eventually. Thus,
Player~$1$ wins $(G_n, col_n, k)$.
\end{proof}

\section{Conclusion}

We have shown how to play parity games on pushdown graphs in finite time. To
this end, we adapted the notions of scoring functions to exploit the intrinsic
structure of a pushdown game graph to obtain an finite-duration game that always has
the same winner as the infinite game. Thus, the winner of a parity game on a
pushdown game graph can be determined by solving a finite reachability game.

This work transfers results obtained for games on finite game graphs to infinite 
graphs. In ongoing work, we investigate if and
how a winning strategy for the safety game, in which Player~$0$ wins if
and only if he prevents his opponent from reaching an exponential stair-score
can be turned into a winning strategy for the original pushdown
game. The winner of these two games is equal, due to
Lemma~\ref{walu_scores_are_equivalent_to_stair_scores}. 

On the other hand, our
results could be extended by considering more general classes of infinite graphs
having an intrinsic structure, e.g., configuration graphs of higher-order
pushdown systems. Finally, there is a small gap between the upper and lower
bound on the threshold score that always yields an equivalent finite-duration
pushdown game, which remains to be closed.

\bibliographystyle{eptcs}
\bibliography{references}
\end{document}